\pdfoutput=1
\documentclass[journal]{IEEEtran}

\usepackage[hyphens]{url}

\usepackage{graphicx}
\usepackage{caption}
\usepackage{subcaption}
\usepackage{amsmath}
\usepackage{amsthm}

\usepackage[table]{xcolor}
\usepackage{mathtools}
\usepackage{amssymb}
\usepackage{tabulary}
\usepackage{booktabs}

\usepackage{algorithmic}
\usepackage[ruled, vlined, linesnumbered]{algorithm2e}

\usepackage{enumitem}
\usepackage{multirow}
\usepackage{stfloats}
\usepackage{threeparttable}

\ifCLASSINFOpdf
\else
\fi

\newtheorem{prop}{Proposition}

\begin{document}
\bstctlcite{IEEEexample:BSTcontrol}
\title{Security Region Compression via Hybrid Vertex Pruning for Enhanced Computational Efficiency}

\author{
Sicheng Gong,~\IEEEmembership{Member,~IEEE}, Zeyu Liu,~\IEEEmembership{Member,~IEEE}, \\
Danyang Xu,~\IEEEmembership{Student Member,~IEEE},
Hongjie Jia,~\IEEEmembership{Senior Member,~IEEE}
\thanks{
This work has been submitted to the IEEE for possible publication. Copyright may be transferred without notice, after which this version may no longer be accessible.

This work was supported by NEON (New Energy and mobility Outlook for the Netherlands, with project number 17628), a cross-over project financed by NWO (Dutch Research Council). S. Gong is with Department of Electrical Engineering, Eindhoven University of Technology, 5612AZ Eindhoven, Netherlands and  Department of Wind and Energy Systems, Technical University of Denmark, 2800 Lyngby, Denmark. (e-mail: s.gong@tue.nl, gongsi@dtu.dk). Z. Liu, D. Xu and H. Jia are with School of Electrical and Information Engineering, Tianjin University, 300072 Tianjin, China (e-mail: tjulzy@tju.edu.cn, danyangxu@tju.edu.cn, hjjia@tju.edu.cn).

}%
}

\markboth{}{}

\maketitle

\begin{abstract}
The security region (SR) specifies a joint feasible region in the operational space involving various installations within the power system. It enables dynamic allocation of capacity among these energy units, thereby unlocking additional grid potential for power delivery and mitigating congestion. By constructing a convex hull from a set of vertices to form a polytope, the SR can be effectively represented with an acceptable accuracy loss. However, the SR computation inevitably generates many low-quality vertices, which contribute little to the overall hypervolume and can be discarded. By introducing SR compression, the SR data size can be reduced, which in turn facilitates the memory-efficient computation of SR and time-efficient solution of SR-related optimization problems. Accordingly, this paper proposes a vertex pruning approach that mixes local and backward search policies to achieve effective SR compression. In particular, the backward search strategy allows previously pruned vertices to be recycled when they are re-identified as critical. Subsequently, we incorporate this approach to enhance computational efficiency in the aforementioned applications. Numerical validations are conducted on two realistic medium-voltage Dutch grid cases. The results demonstrate at most 1.65\% hypervolume loss with over 96.0\% SR data size reduction after applying the proposed compression approach. Empowered by this compression, the 8-dimensional (8D) SR becomes computationally tractable. Moreover, in relevant grid-aware computation challenges, the solving time can be reduced by at least 52.9\% with at most an 1.36\% optimality loss. All above validate both the efficacy and values of the proposed SR compression approach. %
\end{abstract}

\begin{IEEEkeywords}
Security region, convex geometry, vertex pruning, data size reduction, memory space control.
\end{IEEEkeywords}

\IEEEpeerreviewmaketitle

\section{Introduction}
\label{sec:intro}
The energy and mobility transition calls for the large-scale deployment of renewable generations and charging stations\,\cite{IEA2025,perera2023challenges,dunnett2022predicted}. Meanwhile, the capacity shortage within the distribution networks has become a bottleneck of establishing this sustainable energy system, reflected by the frequent occurrence of grid congestion~\cite{IEAcomment,karmaker2024electric,li2024impact}. In this context, besides increasing investments in grid expansion, the distribution system operator (DSO) seeks the coordinated operation of these energy units based on the existing grid assets. However, such flexibility is not infinite. Instead of characterizing this flexibility through algebraic power flow constraints, as given in Fig.\,\ref{fig:SR}, it can be represented geometrically by the SR, which defines a joint feasible region in the multidimensional operational space determined by the various energy units\,\cite{wu1982steady}. Each dimension maps the active/reactive injected power over a certain point of connection (POC).

Alias ``hosting capacity region”, ``flexibility region”, ``feasible operation region”, or “dynamic operation envelope”, the SR allows cooperative operation across the involved POCs within one network. Accordingly, in a smart grid environment, it becomes possible for certain POCs to temporarily exchange additional power when the others are not operating in heavy-load mode.\,\cite{gong2022extended,qin2016quantification,blackhall2020calculation,jiang2023feasible}. As illustrated in Fig.\,\ref{fig:SR}, if the corresponding operational point is located within the SR, the power flow validation can also be omitted. Without grid data security concerns, the DSO may provide an SR to relevant grid customers, and encourage their SR-associated coordinated operation. The flexibility created by this coordination exploits the grid potential for hosting more renewable generation and charging stations, while eliminating the need for additional grid asset investment.
\begin{figure}[]
     \centering
         \centering
         \includegraphics[width=\linewidth]{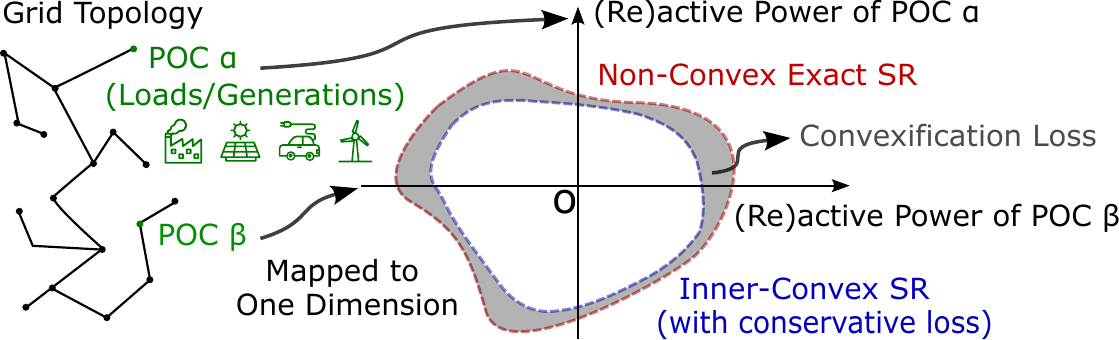}
        \caption{The concepts of SR and its variants}
        \label{fig:SR}
\end{figure}

Despite the significant industrial values that the SR may bring, the SR computation task itself is tough. There still lacks a unified and efficient scheme to ultimately solve this challenge. The SR remains computationally complicated to assess and depict, considering its non-convexity and dimensionality. To address this non-convexity, rooted in intersecting different convex variants of SR, there is an inner convexification scheme to ensure that the final result must be a subset of the original SR\,\cite{gong2023Multi}. Exemplified by the 2-dimensional (2D) instance in Fig.\,\ref{fig:SR}, this inner convexity aims to meet the risk-averse preference of the grid operator. Any grid operation or planning decision based on this result ensures the grid feasibility, thus effectively compensating for the non-convexity of SR. However, in contrast to the resolvability of the non-convexity problem, the dimensionality remains a bottleneck for computing the SR.

In order to further explain the difficulties attributed to this dimensionality, Fig.\,\ref{fig:expand} is provided,  where the convexified SR is initially a continuous body. This body should be approximated by an expansive convex polytope, thus making it discrete and processable for the computer. In a multidimensional case, this SR needs to be finally represented by a well-explored polytope, which iteratively breaks existing facets and includes more vertices for approximation accuracy control.
However, considering the curse of dimensionality, this polytope will typically contain a huge number of vertices, imposing a significant burden when storing or depicting the SR, let alone using this SR in grid applications\,\cite{de2008computational}. Therefore, the DSO anticipates a compact result. To realize this target, the SR compression should be implemented as a significant conclusive step of the SR computation. In practice, regarding a polytope-represented SR, this compression task seeks to reduce the vertex number to a fixed value, while achieving an optimal accuracy loss. It can be mathematically described as
\begin{figure}[]
     \centering
         \centering
         \includegraphics[width=0.95\linewidth]{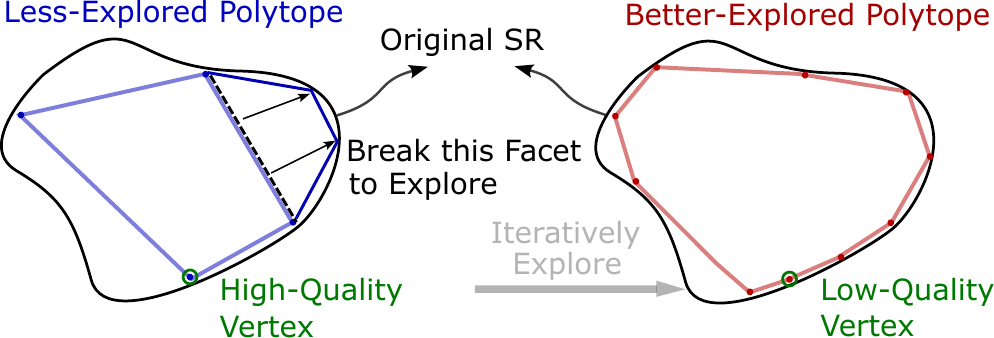}
        \caption{The expansive polytope for SR approximation}
        \label{fig:expand}
\end{figure}

\textit{Let \textbf{V} denote a vertex set, whose convex hull represents the original SR. Given the value of $n$, the problem is to derive a new vertex set $\textbf{S}$, such that
\begin{equation}
\label{eq:prime}\textbf{S}=\underset{|\textbf{S}|=n,\,\textbf{S}\subset \textbf{V}}{\textnormal{argmax}} H(c(\textbf{S}))
\end{equation}
where $|\cdot|$ measures the set cardinality. $c(\cdot)$ defines the convex hull of the given set. $H(\cdot)$ measures the hypervolume of the given convex body.}

However, if $n$ is inappropriately set to a small number, a significant loss of SR accuracy is inevitable, which may exceed the acceptance range of the DSO. Therefore, by alternatively setting a bound on this accuracy loss, we may focus on the dual problem of (\ref{eq:prime}), which is described as

\textit{Let \textbf{V} denote a vertex set. Given the value of $m$, the problem is to derive a new vertex set $\textbf{S}$, such that
\begin{equation}
    \label{eq:dual}\textbf{S}=\underset{H(c(\textbf{S}))\geq m,\,\textbf{S}\subset \textbf{V}}{\textnormal{argmin}} |\textbf{S}|
\end{equation}
where $m$ is the hypervolume threshold of the compressed SR.}

Either with the problem setting being (\ref{eq:prime}) or (\ref{eq:dual}), it can be seen as a polytope simplification challenge, which is NP-hard and has been investigated intensively in recent decades\,\cite{cohen1999concepts}. From a geometric perspective, SR compression shares essential similarities with polytope simplification, including the objectives of vertex number reduction and hypervolume loss control. The polygon reduction approaches using sequential edge collapses has been developed to simplify the polytope, though they rely on the precondition of polygon triangulation\,\cite{melax1998simple,veron1998shape}.  Based on the similar precondition, a mesh partition and recombination method has been developed to compress a large polygon\,\cite{ho2001compressing}. However, both compression methods aims to compress the non-convex polygon, and the convex-polytope-represented SR is already the most concise result for the above methods.
Another technical path of polytope simplification is geometric representation by simplified regularized shapes, for instance ellipsoid, zonotope, and orthotope\,\cite{bemporad2004inner,kousik2022ellipsotopes}. This definitively reduces the representative complexity and guarantees an inner approximation result. However, the hypervolume loss is suboptimal, which may significantly harm the SR values of grid potential exploitation. The DSO anticipates a convex polytope simplification method with optimized hypervolume loss, thereby realizing the SR compression. To address this challenge, this paper proposes a vertex pruning approach to conduct this compression task. The main contributions of this paper are fourfold:
\subsubsection{Theoretical analysis} Beyond clarifying the importance of SR compression, we prove the mathematical equivalence between optimal compression and vertex pruning. This equivalence supports a compact and tangible reformulation of the primal SR compression problem in theory.
\subsubsection{Compression algorithm development} To balance computational optimality and complexity in SR compression, we develop a hybrid vertex pruning approach by tailoring a novel backward search policy and integrating the variable neighborhood search (VNS) method. This compensates for the inherent short-sighted deficiency of the local greedy policy.

\subsubsection{Computation efficiency enhancement} We incorporate the SR compression approach into the relevant computational tasks. This allows SR to be computed in a memory-efficient manner to improve its dimensionality, and also supports the time-efficient solution of SR-related optimization problems.

\subsubsection{Validation on realistic scenarios} Based on two MV grid cases using realistic charging station settings, relevant numerical studies are conducted to confirm the efficacy of the proposed approach, together with the computational efficiency gains brought by the SR compression.

The remainder of this paper is organized as follows. Section~\ref{sec:motivation} explains the motivations of SR compression, including its feasibility and benefits. Afterwards, a hybrid vertex pruning approach is proposed in Section~\ref{sec:method}. The method validations by experimental tests are conducted
in Section~\ref{sec:validation} and a conclusion is provided in Section~\ref{sec:conclusion}.

\section{Security region compression}
\label{sec:motivation}

\subsection{The observant inevitability of low-quality vertices}
\label{subsec:inevitable}
As shown in Fig.\,\ref{fig:compress}, the feasibility of SR compression is revealed by the fact that a part of the vertices can be removed from the polytope with minimal loss of hypervolume. In fact, as depicted by Fig.\,\ref{fig:expand}, throughout the region computation process, low-quality vertices are inherently produced since the SR attempts various exploratory directions. With a low-quality vertex included, the SR hypervolume addition is small. Meanwhile, when a high\mbox{-}quality solution is identified and added, the resulting increase in hypervolume can be substantial. With the iteration of multiple low-quality vertices and one high-quality vertex, the entire SR expansion process eventually leads to a stair-wise hypervolume curve as depicted in Fig.\,\ref{fig:quality}.

By observing a series of SR computation experiments in \cite{gong2023Multi,gong2024rapid}, we empirically conclude that the exploration direction must be iteratively adjusted until a significant SR hypervolume gain is achieved. This process typically yields an initial sequence of low-quality vertices. Therefore, to ensure a trustworthy SR assessment result, an excessive number of low-quality vertices are inevitable to be included. This phenomenon highlights the feasibility and necessity of SR compression.
Notably, as exemplified in Fig.\,\ref{fig:expand}, the role of high-quality vertex is unstable, as it may degrade into a low-quality one with the polytope expanding.
\begin{figure}[]
     \centering
     \begin{subfigure}[b]{0.24\textwidth}
         \centering
         \includegraphics[width=\textwidth]{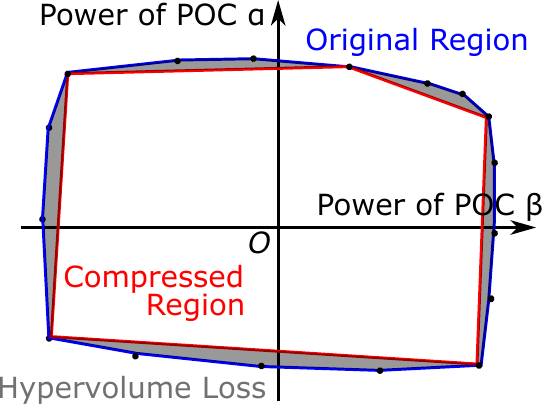}
         \caption{Compression concept}
         \label{fig:compress}
     \end{subfigure}
     \begin{subfigure}[b]{0.24\textwidth}
         \centering
         \includegraphics[width=\textwidth]{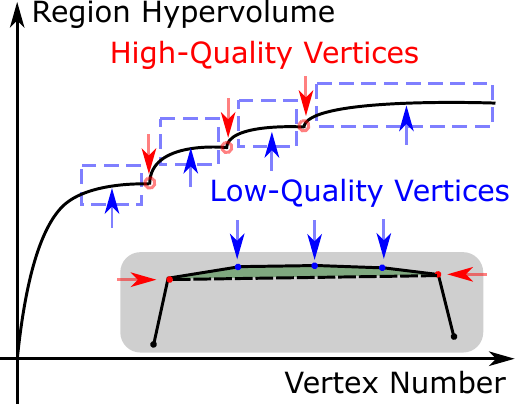}
         \caption{Vertex quality concept}
         \label{fig:quality}
     \end{subfigure}
        \caption{Concepts in SR compression}
        \label{fig:concept}
\end{figure}

\subsection{The benefits of SR compression}
\subsubsection{Memory space saving }
In theory, a vertex set is enough to represent the SR. Meanwhile, in modern computer-aided grid operation practices, the SR objective should store more pre-calculation results, which will be frequently recalled for extended applicability, for instance continued expansion or vertex pruning.  In reference to the SciPy Library, Table\,\ref{tab:data} lists the concrete data structure of SR, where the sizes of most sub-objects depend on the vertex number $N$ and the dimension number $d$\,\cite{2020SciPy-NMeth}. Apart from the hypervolume value that is stored as a single floating-point number, the memory space required to store the SR objective will decrease linearly with the reduction of $N$.

\begin{table}[ht]
    \centering
    \caption{Data structure of an SR object}
    \begin{tabular}{c c c}
        \toprule
        Sub-Object  & Type & Size  \\
        \midrule
         Vertex sequence & Array & $[N,d]$ \\
         Simplex sequence &  Array & $[(d-1)(N-2),d]$ \\
         Facet-based hyperplane sequence &  Array & $[(d-1)(N-2),d+1]$ \\
         Hypervolume &  - & $1$ \\
         \bottomrule
    \end{tabular}
    \label{tab:data}
\end{table}

\subsubsection{Improved dimensionality of SR computation}
The above benefit is based on the assumption of initially abundant memory space for the uncompressed SR. In the current situation of memory chip shortage, facing a computer with limited memory space, the DSO needs to synchronize region expansion and compression to avoid potential memory overflow\,\cite{xiong2025semiconductor}. On the edge of exhausting memory in the process of SR expansion, we can conduct the compression task and then restore the expansion process. This prevents the memory space from being depleted during the entire SR computation process. In practice, limited memory space has become a bottleneck for SR dimensionality. With this synchronous expansion-compression framework, using the same scientific computation asset, a higher-dimension SR calculation task becomes executable. To this end, based on the SR computation scheme developed in \cite{gong2023Multi}, Algorithm\,\ref{Ag:dimension} is developed to realize the SR computation with the controllable memory space requirement, eventually improving the computational dimensionality. The SR compression method in Line 8 at Algorithm\,\ref{Ag:dimension} is adopted as Algorithm\,\ref{Ag:hybrid}, which will be later investigated in Section\,\ref{subsubsec:hybrid}.

\begin{algorithm}[t]
\KwData{Power flow constraint $\textbf{PF}$, vertex number upper bound $M$, vertex number lower bound $\hat{M}$, dimension number $D$.}
Initialize $D+1$ marginal points based on $\textbf{PF}$, and group them into a set $\textbf{V}$ \;
Initialize the polytope $c(\textbf{V})$ for SR representation\;
\Repeat{Stopping criterion reaches}{
\Repeat{$|\textbf{V}|\geq M$}{
Determine SR expansion direction $d$ using  $c(\textbf{V})$. Following $d$, calculate the marginal point $v$ through Bisection method \;
Add $v$ to $\textbf{V}$ \;
}
Run the SR compression to ensure $|\textbf{V}|=\hat{M}$ \;
}
 \caption{Memory-efficient SR computation}
\label{Ag:dimension}
\end{algorithm}

\subsubsection{Complexity reduction of grid-aware optimization problems}
\label{subsubsec:gridaware}
The SR is able to play an alternative role of power flow constraints in grid-aware optimization problems. Accordingly, a compressed SR can contribute to a more compact problem formulation, thereby enhancing the solution speed. To exemplarily illustrate this benefit, two optimization instances are provided, which will be later adopted for further numerical validations in Section\,\ref{sec:validation}.

\paragraph{Optimal energy storage sizing for grid feasibility retrieval} In situations of grid congestion, operational limits of the grid are exceeded because the existing grid customers draw or inject too much power. To retrieve power flow feasibility, energy storage appliances are planned to be installed, thereby locally smoothing the power input and output of the associated energy units\,\cite{wang2014power}. From a geometric standpoint, as exemplarily depicted by the 2D case in Fig.\,\ref{fig:sizing}, the initial operating point lies outside the SR, and the objective of the energy storage equipment is to shift this operating point back into the SR. Consequently, DSO seeks to determine the optimal size of these storage systems, such that the summative axial distance between the original and shifted operation point is minimized\,\cite{gong2023Multi}.

The corresponding mathematical programming model can be formulated as  (\ref{eq:region_op}), where the SR judges whether the new operational point shifted by the energy storage equipment becomes acceptable for the grid. Following the target of optimal sizing, the objective function is defined as the absolute aggregate of the regulating power supplied by the storage equipment. The complexity of solving (\ref{eq:region_op}) highly depends on the number of vertices in SR, as a smaller-size SR will lead to a lower number of linear constraints in (\ref{eq:region_op}). For the given uncompressed SR, its compression can significantly help reduce the solving complexity of (\ref{eq:region_op}).
\begin{equation}
\label{eq:region_op}
\begin{aligned}
    \min~&{\sum_{i\in \textbf{G}_a} |\hat{P}_i -P_i|+ \sum_{j\in \textbf{G}_r} |\hat{Q}_j -Q_j|}  \\
    s.t.~& (\textbf{P}, \textbf{Q}) \in \textbf{R}\\
    & \textbf{P}=\{P_i\,|\,i\in \textbf{G}_a \}\,,\,\textbf{Q}=\{Q_j\,|\,j\in \textbf{G}_r \}
\end{aligned}
\end{equation}
where $\textbf{G}_a$ denotes the set of POCs whose active power is taken as one dimension of the SR. $\hat{P}_i$ is the aggregated active  power over a certain POC before the installation of storage equipment. ${P}_i$ is an auxiliary decision variable, reflecting the active power over this POC after regulation. Its absolute difference from $\hat{P}_i$ denotes the regulating power. $\textbf{G}_r$, $\hat{Q}_j$, and ${Q}_j$ are defined likewise in the reactive power domain. $(\textbf{P}, \textbf{Q})$ denotes a multidimensional point that reflects the active and reactive power over relevant POCs. $\textbf{R}$ denotes its corresponding SR.

\begin{figure}[]
     \centering
     \begin{subfigure}[b]{0.24\textwidth}
         \centering
         \includegraphics[width=\textwidth]{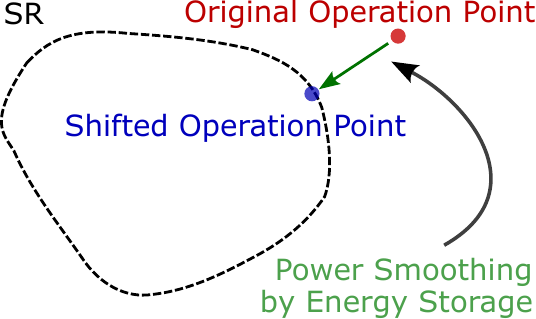}
         \caption{Energy storage sizing}
         \label{fig:sizing}
     \end{subfigure}
     \begin{subfigure}[b]{0.24\textwidth}
         \centering
         \includegraphics[width=\textwidth]{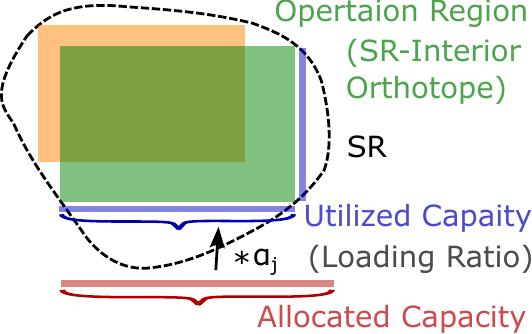}
         \caption{Capacity allocation}
         \label{fig:allocate}
     \end{subfigure}
        \caption{SR-related optimization problems}
        \label{fig:optim}
\end{figure}

\paragraph{Optimal capacity allocation under capacity shortage}
The second case is to maximize the total grid integration capacity assigned to various energy units, thus exploiting grid utility, particularly when the grid is on the edge of congestion. From a geometric perspective, as depicted by Fig.\,\ref{fig:allocate}, this corresponds to maximizing the sum of the edge lengths of an orthotope, whose shrunk version should be an SR-interior orthotope-shaped operation region. The loading ratio can be set to a value around 100\%, because it is rare that all participating energy units operate at full power simultaneously. Meanwhile, certain units may temporarily operate above their capacity limits, although they may later incur penalties from the DSO.

For simplicity of presentation, only active power is considered in this paper. Consequently, the optimization problem can be formulated as (\ref{eq:region_gp}), and this formulation can be readily extended to incorporate reactive power.
\begin{subequations}
\label{eq:region_gp}
\begin{align}
    \max~&{\sum_{i\in \textbf{G}_a} \Delta P_i} \\
    \label{eq:region_gp_2}
    \begin{split}
    s.t.~& (P_1+B_1 \alpha_1 \Delta P_1,\cdots,P_{m}+B_{m} \alpha_m \Delta P_{m})\in \textbf{R}
    \\
    &  (B_1,\cdots,B_{m})\in \{0\,,\, 1\}^{m} \,,\, m=|\textbf{G}_a|
    \end{split}
    \\
    & \Delta P_i\geq 0 \,,\, i\in \textbf{G}_a \,,\, \alpha_i\geq \Hat{\alpha}_i
\end{align}
\end{subequations}
where  $\Delta P_i$ denotes the active power capacity allocated to a certain POC for the incoming energy unit. $|\textbf{G}_a|$ and $\textbf{R}$ share the same definitions with those in (\ref{eq:region_op}). $|\cdot|$ measures the set cardinality and $B_i$ is a binary variable. $(P_1,\cdots,P_{m})$ represents the lexicographical minimum vertex of the targeted orthotope, which contains $2^{m}$ vertices. $\alpha_i$ measures the occupied ratio of capacity, and $\Hat{\alpha}_i$ sets a boundary value to ensure the grid redundancy, to avoid the risk-aggressive capacity allocation result. Therefore, (\ref{eq:region_gp_2}) is composed of $2^{|\textbf{G}_a|}$ constraints, each of which represents that a vertex of the targeted orthotope should be located in the SR. In analogy to (\ref{eq:region_op}), the  complexity of solving (\ref{eq:region_op}) also depends on the vertex number of $\textbf{R}$, thereby promoting SR compression.

\section{Vertex pruning approach design}
\label{sec:method}
Although the feasibility and benefits of SR compression have been clarified above, compared to vertex pruning, it is more of a target than an approach. This section starts with discussing the metathetical relationship between SR compression and vertex pruning. After clarifying their equivalent relationship, concrete vertex pruning approaches are investigated for improved SR compression performance.
\begin{figure}[]
     \centering
     \begin{subfigure}[b]{0.24\textwidth}
         \centering
         \includegraphics[width=\textwidth]{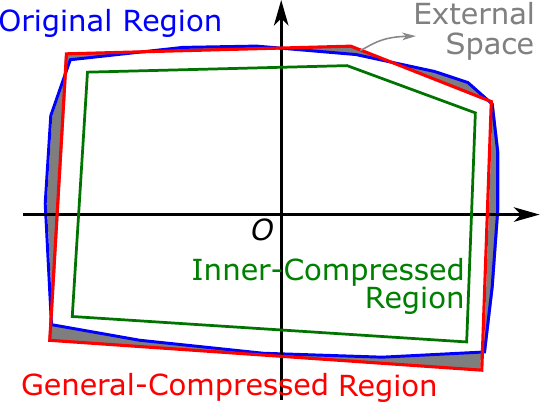}
         \caption{Compression defintions}
         \label{fig:Convention}
     \end{subfigure}
     \begin{subfigure}[b]{0.24\textwidth}
         \centering
         \includegraphics[width=\textwidth]{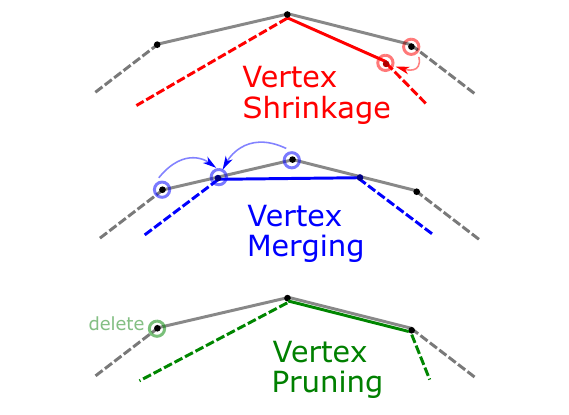}
         \caption{Inner-compression types}
         \label{fig:Inner}
     \end{subfigure}
        \caption{Further illustrations on compression types}
        \label{fig:CompressionType}
\end{figure}

\subsection{Equivalence of SR compression and vertex pruning}
The general definition of ``compression" aims to achieve a high-similarity result with a reduced data volume, indicating a different problem setting from (\ref{eq:prime}) or (\ref{eq:dual}). As shown in Fig.\,\ref{fig:Convention}, the general-compressed region with less vertices may contain external space compared to the original one. Meanwhile, in practice, to reserve the risk-averse property that is preferred by the grid operator, the compressed SR should be a subset of the original polytope, indicating the necessity of an inner-compressed SR. To prevent excessive discussions, the primal inner-compression problem will mainly be investigated in this paper. Inheriting the notations of (\ref{eq:prime}), this  compression task can be mathematically described as

\textit{Let \textbf{V} denote a vertex set, whose convex hull represents the original SR. Given the value of $n$, the problem is to derive a new vertex set $\textbf{S}$, such that
\begin{equation}
\label{eq:prime_compression}
\textbf{S}=\underset{|\textbf{S}|=n\,,\,c(\textbf{S})\subset c(\textbf{V})}{\textnormal{argmax}} H(c(\textbf{S}))
\end{equation}
}

Intuitively, as shown in Fig.\,\ref{fig:Inner}, (\ref{eq:prime_compression}) may lead to three kinds of compression results, including vertex shrinkage, merging, and pruning. Accordingly, (\ref{eq:prime_compression}) owns a larger solution space while sharing the same objective with (\ref{eq:prime}). However, by inheriting the notations in (\ref{eq:prime}), we have Proposition\,\ref{prop:eq}. It argues that (\ref{eq:prime}) is strictly equivalent to (\ref{eq:prime_compression}), and (\ref{eq:prime}) is a compact reformulation of (\ref{eq:prime_compression}). In other words, regarding a polytope-represented SR, the new vertex set that formulates the optimally compressed SR, must be a subset of the original vertex set.
By analogy, the dual format of (\ref{eq:prime_compression}) can also be proved equivalent to (\ref{eq:dual}) by contradiction. To this end, we claim the equivalence of SR compression and vertex pruning, thus making the concept of SR compression tangible.

\begin{prop}
\label{prop:eq}
$\exists \textbf{V}_{o}\in \big\{\,\textbf{S}\,|\,\textbf{S}=\textnormal{argmax}\big[H(c(\textbf{S}))\big],|\textbf{S}|=n\,,\, c(\textbf{S})\subset c(\textbf{V})\,\big\}$\,,\,$\textbf{V}_{o}\subset \textbf{V}$.
\end{prop}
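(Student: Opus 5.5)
The plan is to start from an arbitrary optimal set and move its points, one at a time, onto vertices of $c(\textbf{V})$ without ever decreasing the hypervolume. Two facts drive this: the hypervolume of a convex hull is a convex function of any single generating point when the other points are held fixed, and a convex function on a polytope attains its maximum at a vertex.

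First, existence. I assume $|\textbf{V}|\geq n$; otherwise the feasible set of (\ref{eq:prime}) is empty. Any feasible $\textbf{S}$ of (\ref{eq:prime_compression}) satisfies $\textbf{S}\subset c(\textbf{V})$. So the feasible configurations form a subset of $c(\textbf{V})^n$, which is compact, and $\textbf{S}\mapsto H(c(\textbf{S}))$ is continuous. The cardinality condition $|\textbf{S}|=n$ is not closed, but this is harmless. A configuration with coincident points can be completed by further points of $c(\textbf{V})$ without decreasing $H$, because enlarging a set can only enlarge its hull. Hence a maximizer $\textbf{S}^*=\{x_1,\dots,x_n\}$ exists. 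Also, $c(\textbf{S})\subset c(\textbf{V})$ holds if and only if $\textbf{S}\subset c(\textbf{V})$, so the constraint of (\ref{eq:prime_compression}) acts on each point separately.

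The key lemma is the following. Fix $K=c(\textbf{S}^*\setminus\{x_i\})$ and set $f(x)=H(c(K\cup\{x\}))$. I claim $f$ is convex in $x$.
\begin{itemize}
\item If $K$ is full-dimensional with facets $F$, unit outward normals $a_F$ and offsets $b_F$, then $c(K\cup\{x\})$ is $K$ together with the pyramids over the facets visible from $x$. This gives
\begin{equation*}
f(x)=H(K)+\frac{1}{d}\sum_{F}\textnormal{area}(F)\max\{0,\,a_F^\top x-b_F\},
\end{equation*}
a sum of convex piecewise-linear functions, hence convex.
\item If $\dim K=d-1$, then $f(x)=\frac{1}{d}H_{d-1}(K)\,\textnormal{dist}(x,\textnormal{aff}K)$, which is convex.
\item If $\dim K<d-1$, then $f\equiv 0$.
\end{itemize}

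Consequently, the maximum of $f$ over the polytope $c(\textbf{V})$ is attained at a vertex of $c(\textbf{V})$, and every such vertex lies in $\textbf{V}$. Replacing $x_i$ by a maximizing vertex $v$ therefore gives a set whose hypervolume is at least $H(c(\textbf{S}^*))$. By optimality of $\textbf{S}^*$, the new set is again optimal.

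I then iterate this over $i=1,\dots,n$. A point of $\textbf{S}^*$ that already lies in $\textbf{V}$ is left unchanged. After each step, the replaced point lies in $\textbf{V}$, so after $n$ steps every point is in $\textbf{V}$. The resulting $\textbf{V}_o\subset\textbf{V}$ is optimal, which is the claim of Proposition~\ref{prop:eq}.

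The main obstacle is keeping the cardinality equal to $n$, since the chosen vertex $v$ may already belong to $\textbf{S}^*\setminus\{x_i\}$. In that case $f(v)=H(K)$, so the maximum of $f$ equals $H(K)$ and $x_i$ contributed nothing to the hull. I then replace $x_i$ by any vertex of $\textbf{V}$ not yet in the set. One exists because $|\textbf{V}|\geq n$ and the other $n-1$ points include $v$ and the already-processed points in $\textbf{V}$. Adding a point never decreases the hull, so the hypervolume stays optimal while the set keeps $n$ distinct points.

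The step needing the most care is verifying the pyramid decomposition behind the convexity formula, including degenerate configurations where $x$ lies on the affine hull of a facet. Those cases contribute zero volume, so the formula remains valid.
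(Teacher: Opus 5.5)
Your proof is correct, but it takes a different route from the paper's. The paper argues by contradiction, splitting into cases on where a point $i\in\textbf{V}_o$ with $i\notin\textbf{V}$ lies. If $i$ is outside $c(\textbf{V})$, the set is infeasible. If $i$ is in the interior, the ray from some $j\in\textbf{V}_{o-}$ through $i$ is extended to the boundary, which cannot shrink the hull. If $i$ is on a boundary facet, it is written as $i=\sum_m w_m j_m$ over facet vertices, and the added volume $\frac{1}{d}H(F_k)D(F_k,\cdot)$ over a single replaced facet $F_k$ is treated as affine in the new point, so some $j_m$ does at least as well. An unspecified induction then finishes.

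Your lemma $f(x)=H(K)+\frac{1}{d}\sum_F\textnormal{area}(F)\max\{0,a_F^\top x-b_F\}$ covers all positions at once. Convexity of $f$ on the whole space puts its maximum over $c(\textbf{V})$ at a vertex, so interior points need no ray extension and boundary points need no facet bookkeeping. It is also sturdier exactly where the paper is fragile. The paper's chain of equalities is justified only when $i$ and every $j_m$ see the same single facet $F_k$ of $c(\textbf{V}_{o-})$. In general several facets are visible, and the visible set changes across the facet, so the volume is only piecewise affine there. The relation that actually holds is the inequality $f(i)\leq\sum_m w_m f(j_m)$, which is precisely what your convexity supplies.

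You also make explicit two points the paper skips: the existence of a maximizer, which the paper presupposes, and keeping $|\textbf{S}|=n$ when the chosen vertex is already in the set. What the paper's version offers in return is locality: it names the vertices of the facet through $i$ as the replacement candidates, though as written it does not justify the affinity this relies on.

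A small simplification is available to you. Your replacement step never decreases $H$ for \emph{any} feasible $\textbf{S}$, not only an optimal one, so every feasible set is dominated by an $n$-subset of $\textbf{V}$. Maximizing over these finitely many subsets then gives existence directly, with no compactness argument.
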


\begin{proof}
We prove it by contradiction. If $\textbf{V}_{o}$ must contain at least one point $i$ that $i\notin \textbf{V}$, $i$ might have three different situations, including being outside the polytope, strictly inside the polytope (excluding the margin), and on the margin:

(1) $i\notin c(\textbf{V})$: Since $i\in \textbf{V}_{o}$, it contradicts $c(\textbf{V}_{o}) \subset c(\textbf{V})$\,.

(2) $i\in c(\textbf{V})$\,;\,$\forall F\in \textbf{F}, i\notin F$: $\exists j \in \textbf{V}_{o-}$, extends the vector $(j,i)$ to configure a intersection point $k$ with $F$. Then $H(c(\textbf{V}_{o-}\cup k))>H(c(\textbf{V}_{o}))$, violating the optimality of $\textbf{V}_{o}$\,.

where $\textbf{F}:=\{ c(\textbf{V}_{-1})\,|\, \textbf{V}_{-1}\subset \textbf{V}, |\textbf{V}_{-1}| = d-1\}$, d is the polytope dimension number,
$\textbf{V}_{o-}:= \{v\,|\,v\in\textbf{V}_{o},v\neq i \}$\,.

(3) $i\in c(\textbf{V})$\,;\,$\exists F\in \textbf{F}, i\in F$: $\exists \{w_1, w_2,\cdots, w_{d-1}\}$, such that $i=\sum\limits_{m=1}^{d-1}w_m\cdot j_m$, where $\sum\limits_{m=1}^{d-1}w_m=1, w_m>0$. Then
\begin{subequations}
    \begin{align}
        H(c&(\textbf{V}_{o})) = H(c(\textbf{V}_{o-}))+ H(c(\textbf{K}\cup i)) \\
        &= H(c(\textbf{V}_{o-}))+ \frac{H(F_k)}{d}\cdot D(F_k,i) \\
        \label{subeq:superposition}
        &=H(c(\textbf{V}_{o-}))+ \frac{H(F_k)}{d}\sum\limits_{m=1}^{d-1}w_m\cdot D(F_k,j_m) \\
        &=\sum\limits_{m=1}^{d-1}w_m\big[ H(c(\textbf{V}_{o-}))+ \frac{H(F_k)}{d}\cdot D(F_k,j_m)\big]\\
        &= \sum\limits_{m=1}^{d-1}w_m\cdot H(c(\textbf{V}_{o-}\cup j_m)) \\
        \label{subeq:leq}
        &\leq \underset{m\in\textbf{Z}^{+},m<d}{\max}\{H(c(\textbf{V}_{o-}\cup j_m))\}
    \end{align}
\end{subequations}
where $\{j_1,j_2,\cdots,j_{d-1}\}$ defines the vertex set of $F_i$ considering $i\in F_i$ and $F_i\subset \textbf{F}$. In analogy to $\textbf{F}$, $\textbf{F}_o$ and $\textbf{F}_{o-}$ define the facet sets of $c(\textbf{V}_{o})$ and $c(\textbf{V}_{o-})$. $F_k:=\textbf{F}_o \setminus \textbf{F}_{o-}$, which is the replaced facet of $c(\textbf{V}_{o-})$ after adding $i$ to $\textbf{V}_{o-}$ for polytope expansion.  $\textbf{K}$ is the vertex set of $F_k$. $D(\cdot,\cdot)$ calculates the distance between the facet-defined hyperplane and the point, which satisfies the superposition principle for points that are located in the same hyperplane-defined halfspace. This explains the validity of (\ref{subeq:superposition}). $\textbf{Z}^{+}$ is the positive integer set.

The inequality relationship in (\ref{subeq:leq}) contradicts the optimality of $\textbf{V}_{o}$. The equality relationship in (\ref{subeq:leq}) becomes valid only when $D(F_k,j_m)$ remains the same for all $j_m$, indicating that $F_k$ is in parallel with $F_i$. Hence, $\textbf{V}_{o-}\cup j_m$ can be taken as another optimal solution. Inductively, we can find an optimal solution where all points belong to $\textbf{V}$, which contradicts the basic pre-assumption.

\end{proof}

\subsection{Step-by-step and one-off patterns}
The design of concrete pruning policy is based on the selection of pruning pattern. There are two patterns including the step-by-step and one-off ones. The front pattern may determine one vertex to prune in each iteration, while the latter one prunes a group of vertices at once.
If the problem scale is processable for the grid operator, an one-off pruning pattern becomes recommendable due to its horizontal optimality. Meanwhile, given a large-scale pruning problem, we recommend a step-by-step pattern for two reasons:
\paragraph{Controllable computation burden} The one-off pattern asks to solve (\ref{eq:prime}) directly after determining the value of $n$, whose computational hardness is emphasized by its NP-hard essence. Since $H(\cdot)$ is hard to write in an analytical format, the traversal process is inevitable to acquire an optimal solution to (\ref{eq:prime}). With $M$ denoting the value of $|\textbf{V}|$, the summative computational complexity is up to $\frac{(M-1)!M^2logM}{n!(M-n)!}$, considering the single-round computational complexity of $H(\cdot)$ is $MlogM$. Instead, the step-by-step pattern can be controllable at each iteration, whose stepwise computation burden is no larger than $M^2logM$ if adopting the greedy strategy. To this end, the summative overestimated computational complexity is reduced to $\frac{M(M+1)(M+2)logM}{6}$, which is much lower than that of the one-off pattern.

\paragraph{Transparent and resilient process} The step-by-step pattern contributes to a transparent pruning process for ease of monitoring. With limited computational resources and time, the pruning process can be flexibly suspended or restarted with progressive pruning decisions. This eliminates the risk of missing or low-quality interim results under interruptions if adopting the one-off pattern.  Moreover, the step-by-step pattern renders the selection between the problem settings of (\ref{eq:prime}) and (\ref{eq:dual}) unnecessary, as the stopping criterion can adaptively alternate between the vertex number and the region loss. This provides additional administrative flexibility when executing the SR compression task.

\subsection{Concrete vertex pruning approaches}
Based on the step-by-step pattern, two fundamental pruning policies are designed. Furthermore, a hybrid approach that alternates between these two policies is proposed to achieve a trade-off between decision complexity and optimality.
\subsubsection{Local search policy}
By following the greedy principle, the local search policy makes a locally optimal decision at each step, which is equivalent to the solution of (\ref{eq:prime}) by setting $n$ to be $|\textbf{V}|-1$.
By defining $N$ as the current SR vertex number at a certain step, the computational complexity of this policy is as low as $O(N^2 logN)$, which is contributed by a quick sort and the N rounds of $H(\cdot)$ computation. However, the greedy nature of this local search policy harms the long-term optimality of the entire pruning process, which can be demonstrated by a counterexample provided in Appendix\,\ref{app:counter}. Meanwhile, facing the finite computational capacity of the DSO, a suboptimal solution is acceptable, as long as its solution quality remains high and contributes to a significant reduction of the computational burden.

\subsubsection{Backward search policy}
To compensate for the loss of long-term optimality, a backward search policy is devised to improve decision quality in the long-term vision. As illustrated in Fig.\,\ref{fig:back}, Vertex\,1 is pruned by the local search policy at the first step. Afterwards, using the backward search policy, regarding Vertex\,5, we seek an alternative vertex within its neighborhood, particularly focusing on the pruned vertices to replace Vertex\,5 in the end. Eventually, Vertex\,1 is selected and recycled, with Vertex\,5 being pruned. In essence, without the direct participation in the compression process, this backward policy aims to figure out a larger SR by reassessing the pruned vertices and recycling the valuable ones. Compared with the local search policy, this backward one ensures a higher-quality result in the price of increased computational effort. Quantitatively, with $\hat{M}$ denoting the neighborhood set cardinality, the computational complexity of this backward policy at each iteration is $O(\hat{M}\cdot N^2logN)$. As $M$ is set higher in this policy, we can achieve a better recycling outcome, though at the cost of higher computational expense.

Aiming for an appropriate balance between the gain of backward search and its associated computational cost, the VNS metaheuristic has been incorporated \,\cite{hansen2010variable}. The initial phase of this procedure is identical to the original search policy, and it determines an optimal vertex $v_{n}$ within the neighborhood of the candidate vertex $v_o$. In the subsequent phase, $v_n$ is substituted for $v_o$, and a new optimal vertex $v_f$ is then identified within the neighborhood of $v_n$. Within the operations research literature, the VNS approach has been demonstrated to constitute an effective heuristic framework for trading-off solution quality against computational complexity. Specifically, it asymptotically identifies an optimal candidate within the $O(M^2)$-size neighborhood set, while constraining the summative search set size to $O(2M)$. Consequently, relative to the original backward search policy described above, this VNS-enhanced version seeks to obtain a higher-quality recycled vertex while keeping its computational complexity under control.

\begin{figure}[]
     \centering
      \begin{subfigure}[b]{0.24\textwidth}
         \centering
         \includegraphics[width=\textwidth]{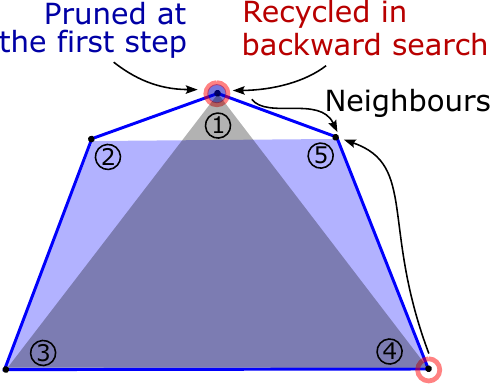}
         \caption{Backward search policy}
         \label{fig:back}
     \end{subfigure}
     \begin{subfigure}[b]{0.24\textwidth}
         \centering
         \includegraphics[width=\textwidth]{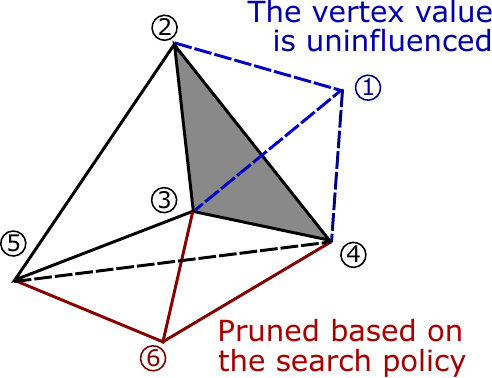}
         \caption{Search policy acceleration}
         \label{fig:facet}
     \end{subfigure}
        \caption{Vertex pruning policy and acceleration}
        \label{fig:acceleration}
\end{figure}
\begin{algorithm}[t]
\KwData{Vertex list $\textbf{V}$, neighborhood set $\textbf{N}(v)$ of vertex~$v$, the policy alternation threshold $T$.}
Initialize the $\textbf{V}$-respective hypervolume loss list $\textbf{L}$, $\textbf{L}[i]= H\big(c(\textbf{V})\big)-H\big(c(\textbf{V}\setminus \textbf{V}[i] )\big)$ \textit{\Big($H\big(c(\cdot)\big)$ returns the convex hull hypervolume of the vertex set\Big)} \;
\Repeat{stopping criterion reaches }{
\If{$l\geq T $}{
Pop $\textbf{V}[i]$ from $\textbf{V}$, $i=\textnormal{argmin}\{\textbf{L}[i]\}$ \;
$l= \textbf{L}[i]$, Update $\textbf{L}$ \;}
\Else{
Let $v_o=\textbf{V}[i]$, $i=\textnormal{argmin}\{\textbf{L}[i]\}$ \;
$\textbf{V}_{-} = \textbf{V} \setminus v_o$, $\textbf{N}_{o} =\textbf{N}(v_o)$, $v_n=v_o$ \;
\For{$v$ in $\textbf{N}_o[i]$}{
$\Delta= H\big(c(\textbf{V}_{-}\cup v )\big)-H\big(c(\textbf{V}_{-}\cup v_o )\big)$ \;
\If{$\Delta \geq 0$}
{Replace $v_n$ by $v$ \;}
}
$\textbf{V}_{-} = \textbf{V} \setminus v_n$, $\textbf{N}_{n}=\textbf{N}(v_n)$, $v_f=v_n$ \;
\For{$v$ in $\textbf{N}_{n}[i]$}{
$\Delta= H\big(c(\textbf{V}_{-}\cup v)\big)-H\big(c(\textbf{V}_{-}\cup v_n )\big)$ \;
\If{$\Delta \geq 0$}
{Replace $v_f$ by $v$ in $\textbf{V}$ \;
$\textbf{L}[i]= (\textbf{L}[i]+\Delta)$, $l= \textbf{L}[i]$ \;}
}
}
}
 \caption{Hybrid vertex pruning}
\label{Ag:hybrid}
\end{algorithm}

\subsubsection{The hybrid pruning approach}
\label{subsubsec:hybrid}
Based on two search policies above, by integrating the VNS method, a hybrid vertex pruning approach has been proposed as given in Algorithm\,\ref{Ag:hybrid}, where the processed vertex list $\textbf{V}$ represents the compressed SR. For further illustrations, in the initial compression stage, the local search policy is recommended for quick low-quality vertex pruning. Afterwards, the backward policy is integrated to formulate a hybrid approach for pruning vertices more carefully. To this end, a hybrid scheme is established to alternate between local and backward search policies, aiming for eventual high-quality pruning decisions with an acceptable computational burden. A threshold on the compression-induced hypervolume loss can be set, crossing which will force a integration of the backward search policy. Notably, the update of $\textbf{L}$ in Algorithm\,\ref{Ag:hybrid} can be accelerated. As illustrated in Fig.\,\ref{fig:facet}, if Vertex\,6 is pruned, the value of Vertex\,1, indicating the potential hypervolume loss after pruning this vertex, remains unchanged. Therefore, the corresponding element in $\textbf{L}$ is unnecessary to recalculate, thus accelerating the entire pruning process.

\section{Experimental results}
\label{sec:validation}

Two Dutch realistic 10.5 kV grid cases, namely a 9-bus system and a 296-bus system, are used to numerically verify the effectiveness of the proposed methods. The relevant grid parameters have been listed as open-source in the appendix of \cite{gong2025thesis}, which will no longer be repeated in this paper. Focusing on the active power, by referring to the SR computation results in \cite{gong2023Multi} and \cite{gong2024rapid}, the SRs of POC\,2,4,7 and POC\,3,5,7 in the 9-bus system are taken as two testing instances, which separately own 477 and 468 vertices. Regarding the 296-bus case, the SR of POC\,2,25,27 is taken as the additional testing instances with 504. These three polytope-represented SRs have at most 2.14\% hypervolume loss compared to the exact ones, and they are anticipated to be further compressed for data size reduction. All codes ran on a computer with an Apple M4 processor (10-core) and 16\,GB of RAM, operating under MacOS\,26 version.

\subsection{The performance analysis in SR compression}
The SR compression is implemented through running Algorithm\,\ref{Ag:hybrid}. The neighborhood set cardinality for VNS realization is set to 5, which corresponds to $|\textbf{N}(v)|$ in Algorithm\,\ref{Ag:hybrid}.

\subsubsection{Results in the 9-bus testing system}
In the instance of POC\,2,4,7, the hypervolume loss threshold is set at 1\%, which may trigger the substitution of the local search policy with the backward one. The result demonstrates that the original SR has already been reduced to a polytope of 15 vertices before this substitution. Accordingly, to keep coherent compression performance analysis, regarding the other instance of POC\,3,5,7, we set the threshold at 0.5\%, such that the compressed SR also has 15 vertices before switching the policy.

\paragraph{The first phase of hybrid vertex pruning}
During the initial compression stage, by only utilizing the local search policy, both SRs continue to be compressed, and the relative values of their cumulative hypervolume loss have been plotted in Fig.\,\ref{fig:initial}. Though both SR data sizes have been reduced by at least 96.7\%, they suffer at most 1.25\% hypervolume loss. To further illustrate the representative conciseness attributed to the compression, the original and compressed SRs in both testing instances have been depicted in Fig.\,\ref{fig:compareSR}.

\begin{figure}[]
     \centering
         \centering
         \includegraphics[width=0.95\linewidth]{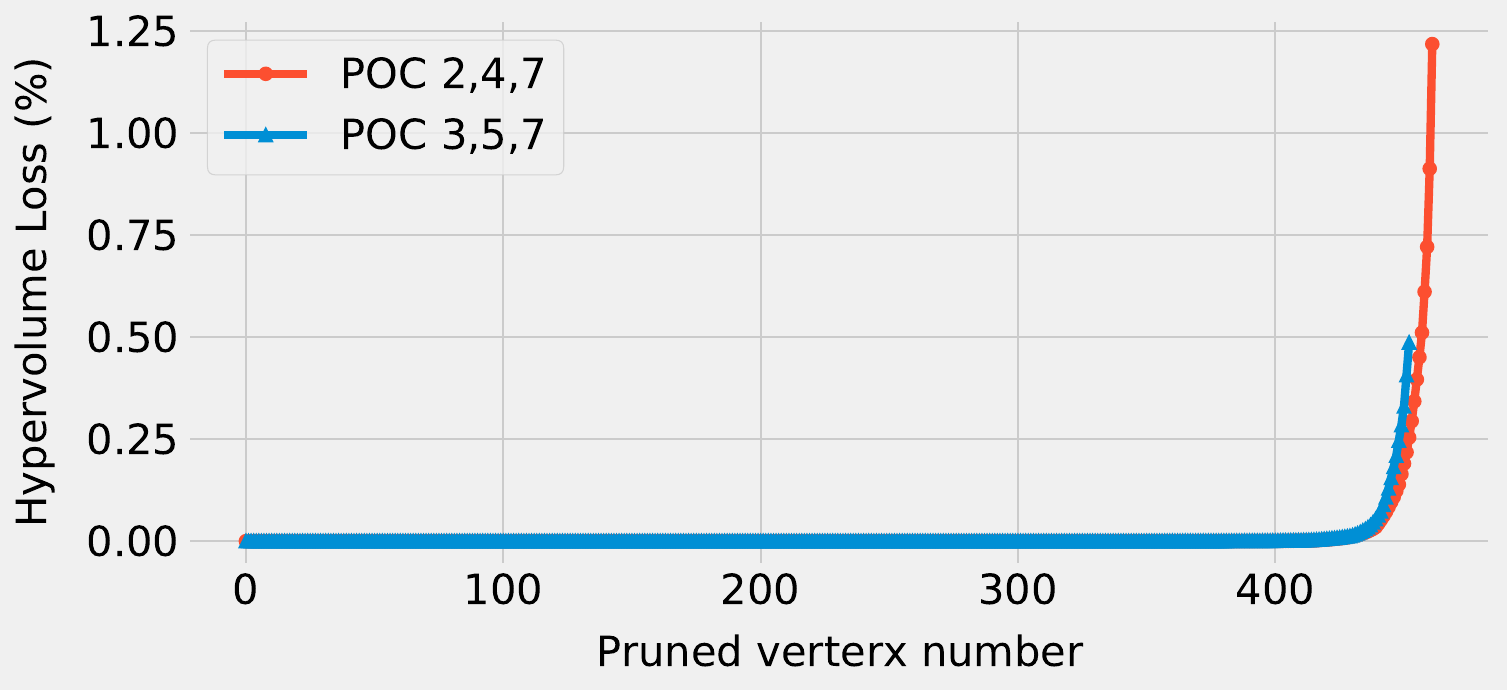}
        \caption{The compression performance of local search policy}
        \label{fig:initial}
\end{figure}

In fact, when both SRs are pruned to 50-vertex polytopes, the relative total hypervolume loss remains lower than 0.08\textperthousand. This supports our claim that the ratio of low-quality vertices is higher than 89\% in both instances, which matches the empirical conclusion regarding the inevitable presence of low-quality vertices as stated in Section\,\ref{subsec:inevitable}.

\begin{figure}[]
     \centering
     \begin{subfigure}[b]{0.24\textwidth}
         \centering
         \includegraphics[width=\textwidth]{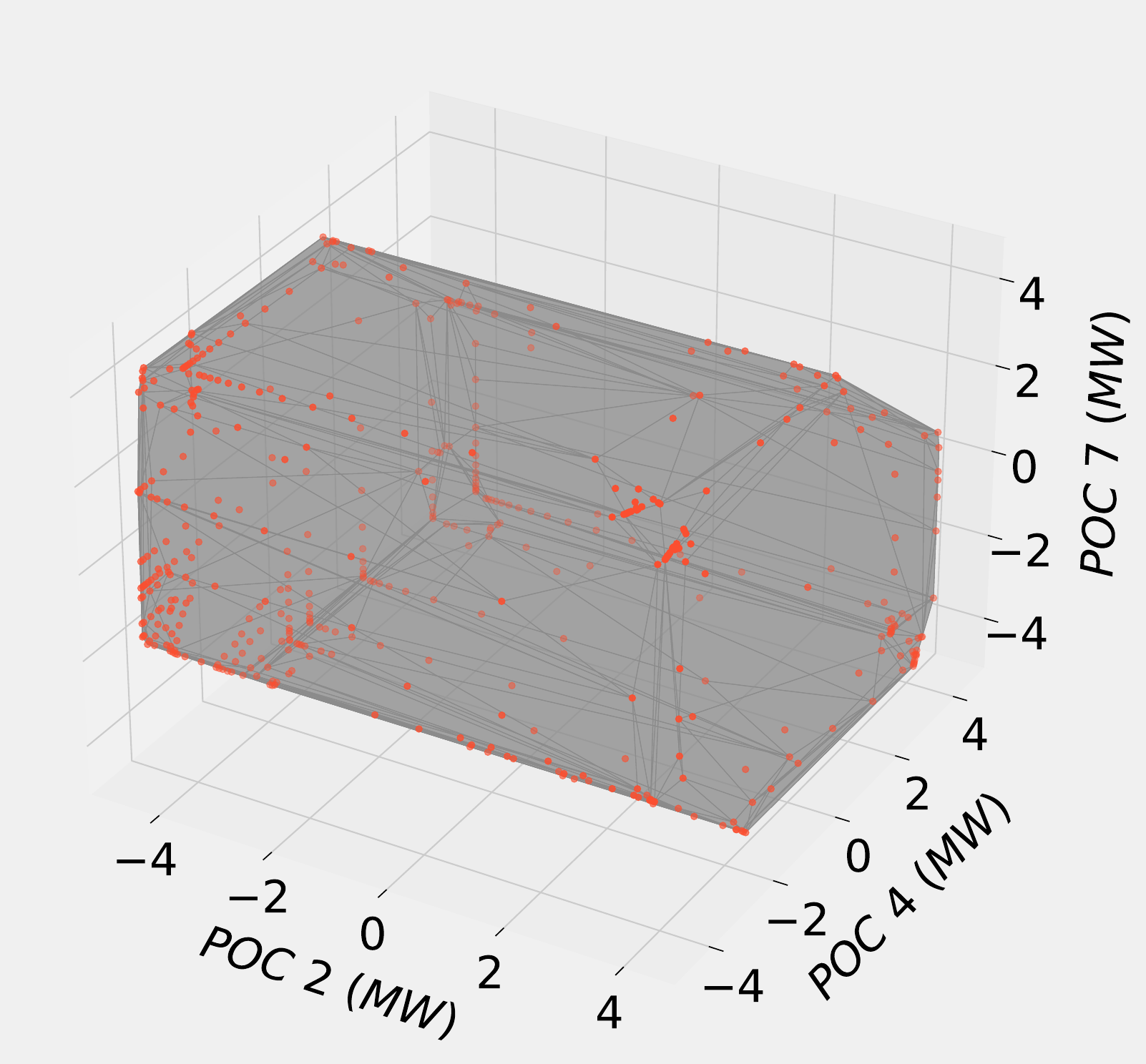}
         \caption{477-vertex SR of POC\,2,4,7}
         \label{fig:477SR247}
     \end{subfigure}
     \begin{subfigure}[b]{0.24\textwidth}
         \centering
         \includegraphics[width=\textwidth]{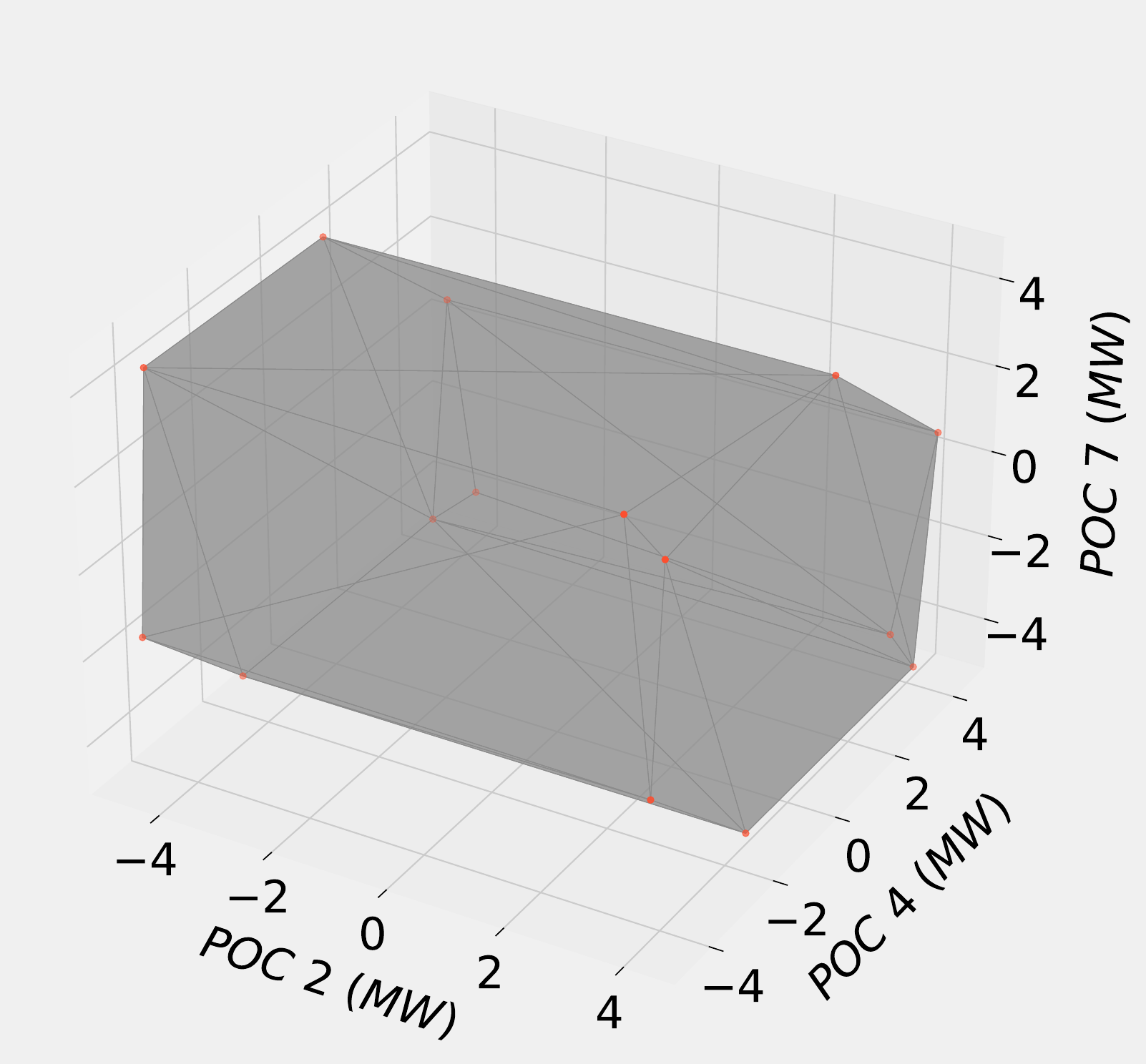}
         \caption{15-vertex SR of POC\,2,4,7}
         \label{fig:15SR247}
     \end{subfigure}
          \begin{subfigure}[b]{0.24\textwidth}
         \centering
         \includegraphics[width=\textwidth]{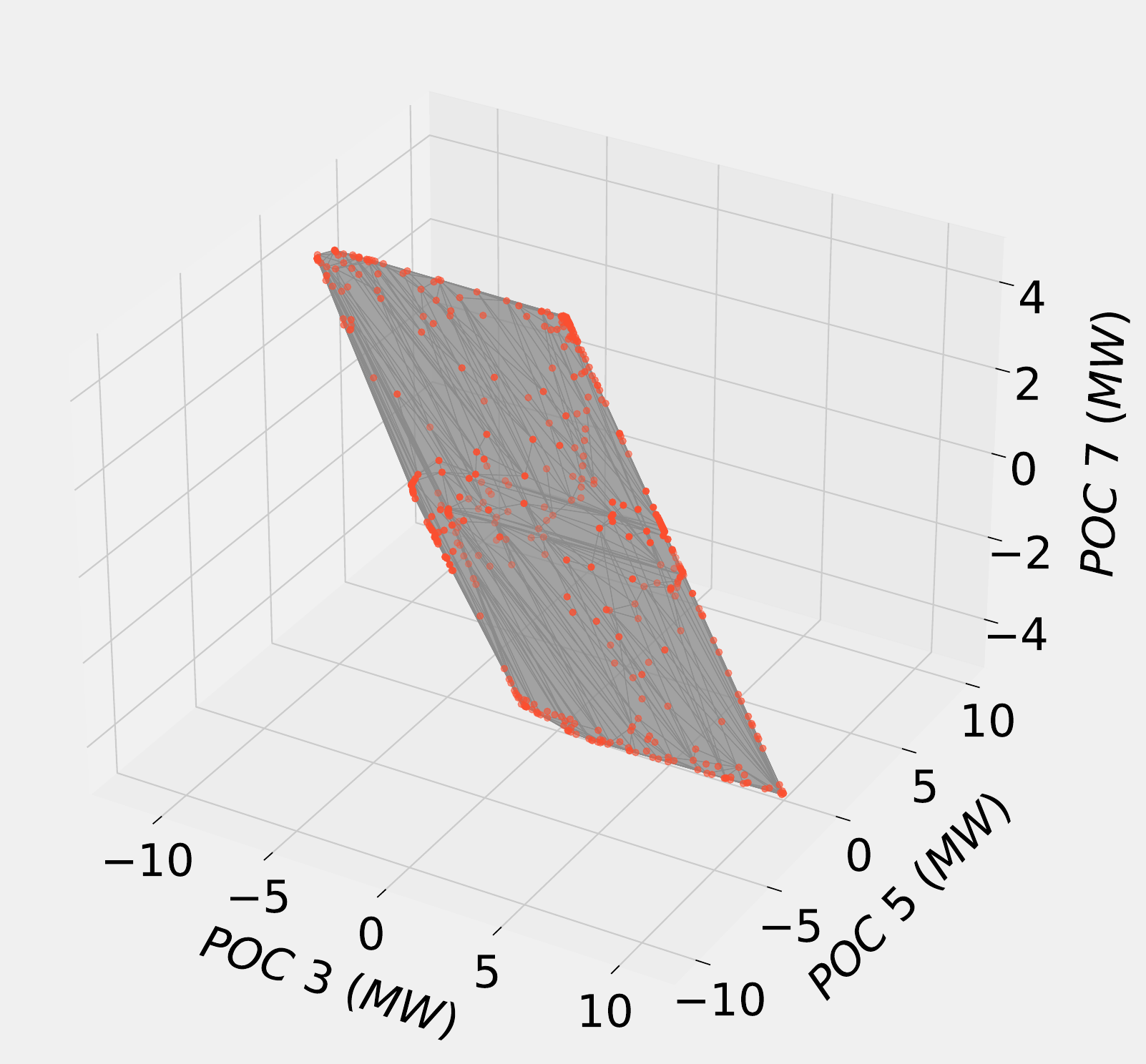}
         \caption{468-vertex  SR of POC\,3,5,7}
         \label{fig:477SR357}
     \end{subfigure}
     \begin{subfigure}[b]{0.24\textwidth}
         \centering
         \includegraphics[width=\textwidth]{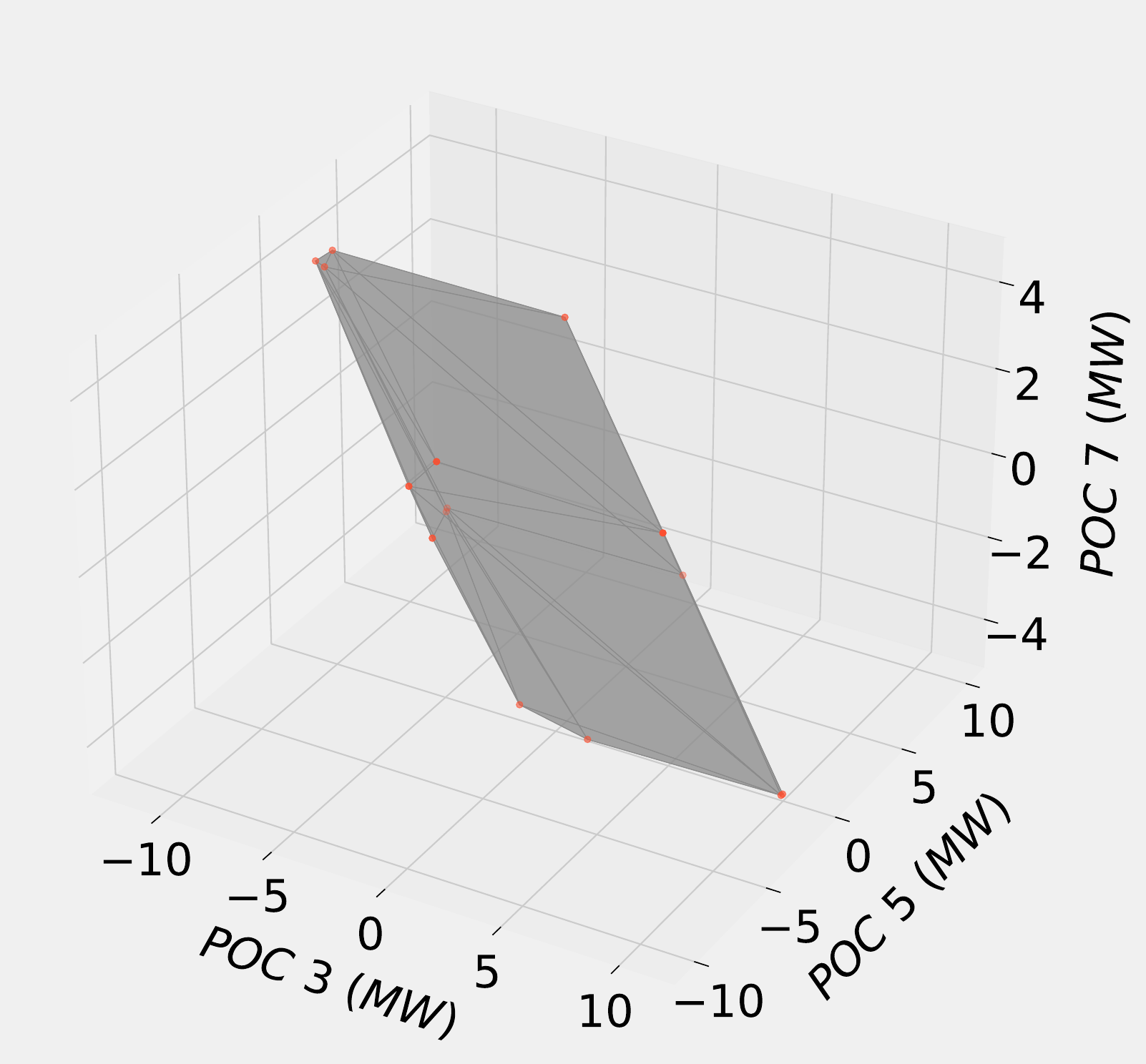}
         \caption{15-vertex SR of POC\,3,5,7}
         \label{fig:15SR357}
     \end{subfigure}
        \caption{The original and compressed SRs in the 9-bus system}
        \label{fig:compareSR}
\end{figure}

\paragraph{The second phase of hybrid vertex pruning}
Considering such advanced compression performance as above, in practice, the stopping criterion should work to interrupt the compression. Meanwhile, in a multidimensional case, the pure local search policy can be insufficient. To provide a complete view of the efficacy of this hybrid strategy, we further compress the 15-vertex SRs into 5-vertex ones. Since the hybrid approach requires the integration of the backward search policy, its neighborhood size is set to 50 before running the experiment. To validate the necessity of this backward search policy, we conduct a comparative analysis in which the backward policy is either adopted or not in this remaining compression task.

As depicted in Fig.~\ref{fig:15vertex}, compared to the pure local search policy, the hybrid pruning approach consistently contributes to a lower hypervolume loss.
Looking into the entire compression process, regarding the instance of POC\,3,5,7, compared to the pure local search policy, we preserve an extra 35.57\% of the original SR hypervolume by using this hybrid pruning approach. To this end, the efficacy of the proposed hybrid vertex pruning approach has been confirmed.

\begin{figure}[]
     \centering
     \begin{subfigure}[b]{0.24\textwidth}
         \centering
         \includegraphics[width=\textwidth]{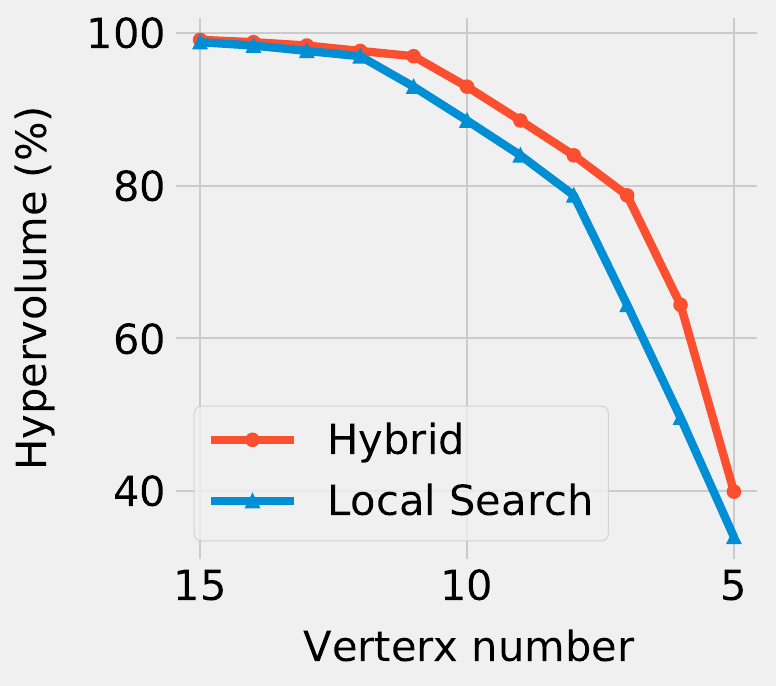}
         \caption{POC\,2,4,7}
         \label{fig:15vertex_247}
     \end{subfigure}
     \begin{subfigure}[b]{0.24\textwidth}
         \centering
         \includegraphics[width=\textwidth]{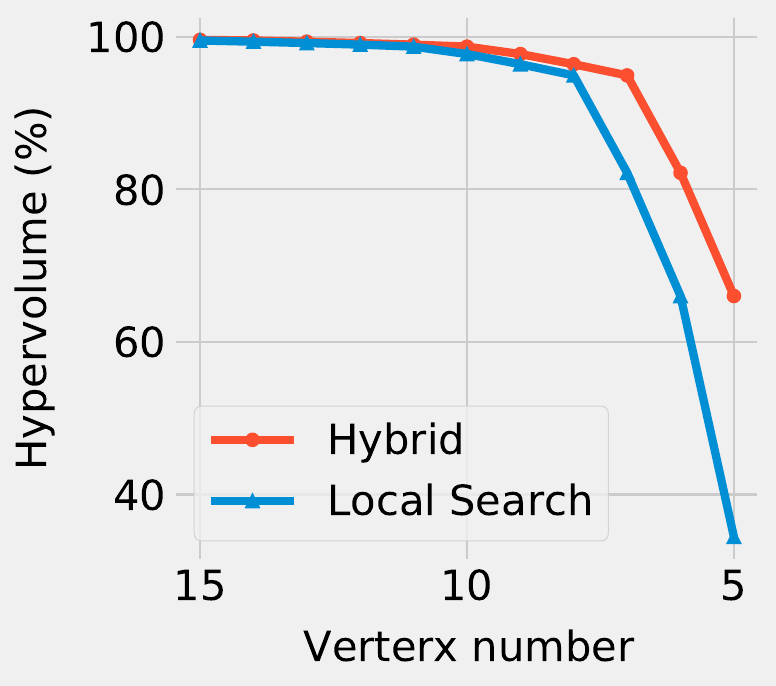}
         \caption{POC\,3,5,7}
         \label{fig:15vertex_357}
     \end{subfigure}
        \caption{The SR hypervolume curves in the 9-bus system}
        \label{fig:15vertex}
\end{figure}

\subsubsection{Results in the 296-bus testing system}
Similar results are observed in this case after employing Algorithm\,\ref{Ag:hybrid} for SR compression. The original and compressed SRs in both instances have been plotted in Fig.\,\ref{fig:compareSR_296}, exhibiting the visually negligible hypervolume loss after SR compression. The data size can be reduced by 96.0\% with 1.65\% hypervolume loss. Although the compression method performance in this case is weaker than that in the 9-bus system, the compression results remain satisfactory for the DSO. To this end, the effectiveness of  Algorithm\,\ref{Ag:hybrid} has been twice confirmed.
\begin{figure}[]
     \centering
     \begin{subfigure}[b]{0.24\textwidth}
         \centering
         \includegraphics[width=\textwidth, trim={0 0 0 2cm}, clip]{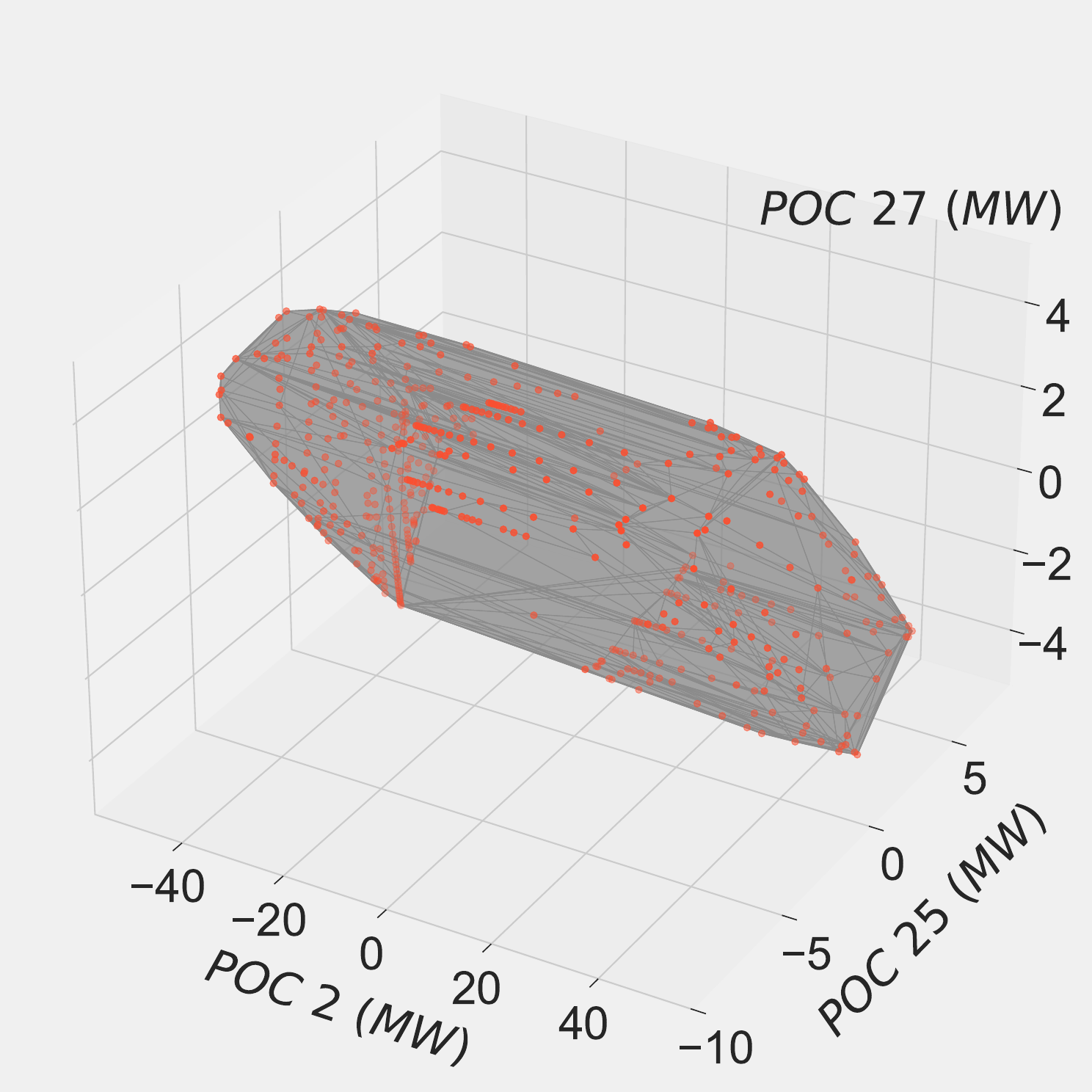}
         \caption{504-vertex SR of POC\,2,25,27}
         \label{fig:504SR22527}
     \end{subfigure}
     \begin{subfigure}[b]{0.24\textwidth}
         \centering
         \includegraphics[width=\textwidth, trim={0 0 0 2cm}, clip]{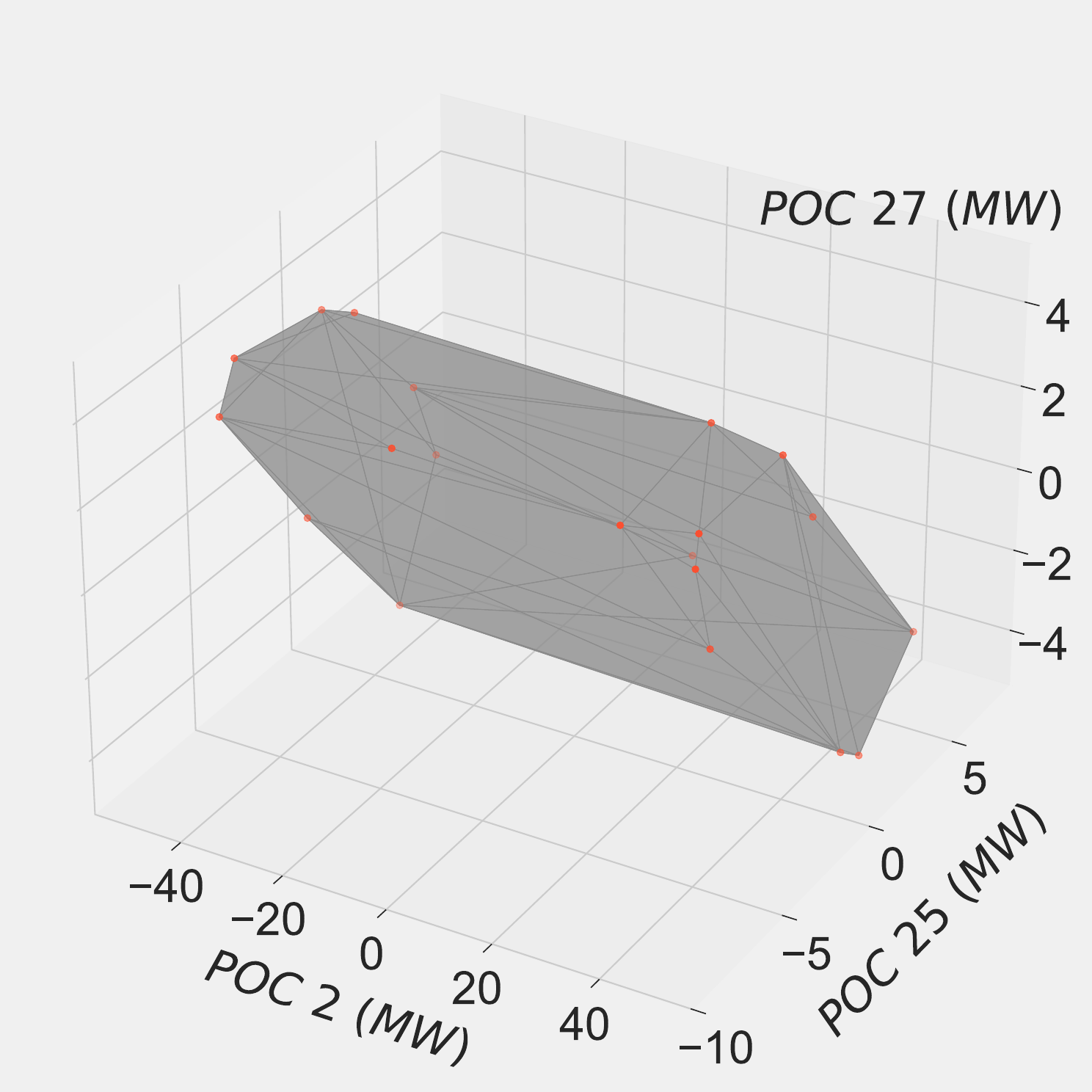}
         \caption{20-vertex SR of POC\,2,25,27}
         \label{fig:15SR22527}
     \end{subfigure}
        \caption{The original and compressed SRs in the 296-bus system}
        \label{fig:compareSR_296}
\end{figure}

\subsection{The SR dimensionality under limited memory space}
After validating the effectiveness of the proposed SR compression method in the 3-dimensional (3D) setting, this subsection applies Algorithm\,\ref{Ag:dimension} to compute the 8D SR. For the 9-bus system case study, the active power injections at POC\,1-8 define the eight dimensions of this SR. With the memory limit set to 500\,MB, $M$ is assigned an excessively large value so that the compression step in Algorithm\,\ref{Ag:dimension} is effectively disabled. Accordingly, this memory-careless SR computation method is executed until a memory overflow occurs. It is found that when the SR expands to a polytope with 158 vertices, the memory limit has been violated. In this situation, the computation process should be suspended, and this 158-vertex polytope will be taken as the result, and the corresponding SR hypervolume is only $1.14\times 10^{10}\,MW^8$. In parallel, under the same experimental conditions, the memory-efficient method as stated in Algorithm\,\ref{Ag:dimension} has been implemented. The upper bound on the number of vertices $M$ is set to 150, while the lower bound $\hat{M}$ is set to 75. The number of compression–expansion iterations is set to 10. The resulting SR hypervolume reaches \(9.89\times 10^{10}\,\text{MW}^8\), which is 7.67 times larger than the memory-careless result.

All computational outcomes of various computation methods are depicted in Fig.\,\ref{fig:vol_curve}, including the interim result in each iteration for memory-efficient calculation. Regarding the memory-careless computation method, its last available result prior to memory overflow exhibits a substantial computation error, as indicated by the significant hypervolume discrepancy relative to the SR-compression-enhanced result. This confirms that the memory-careless version fails to effectively complete the 8D SR calculation, although it has been proved able to compute the 3D SR efficiently in \cite{gong2023Multi}. To this end, with the memory-efficient advantage, SR compression is confirmed to be able to enhance the dimensionality of SR computation.
\begin{figure}[]
     \centering
         \centering
         \includegraphics[width=0.95\linewidth]{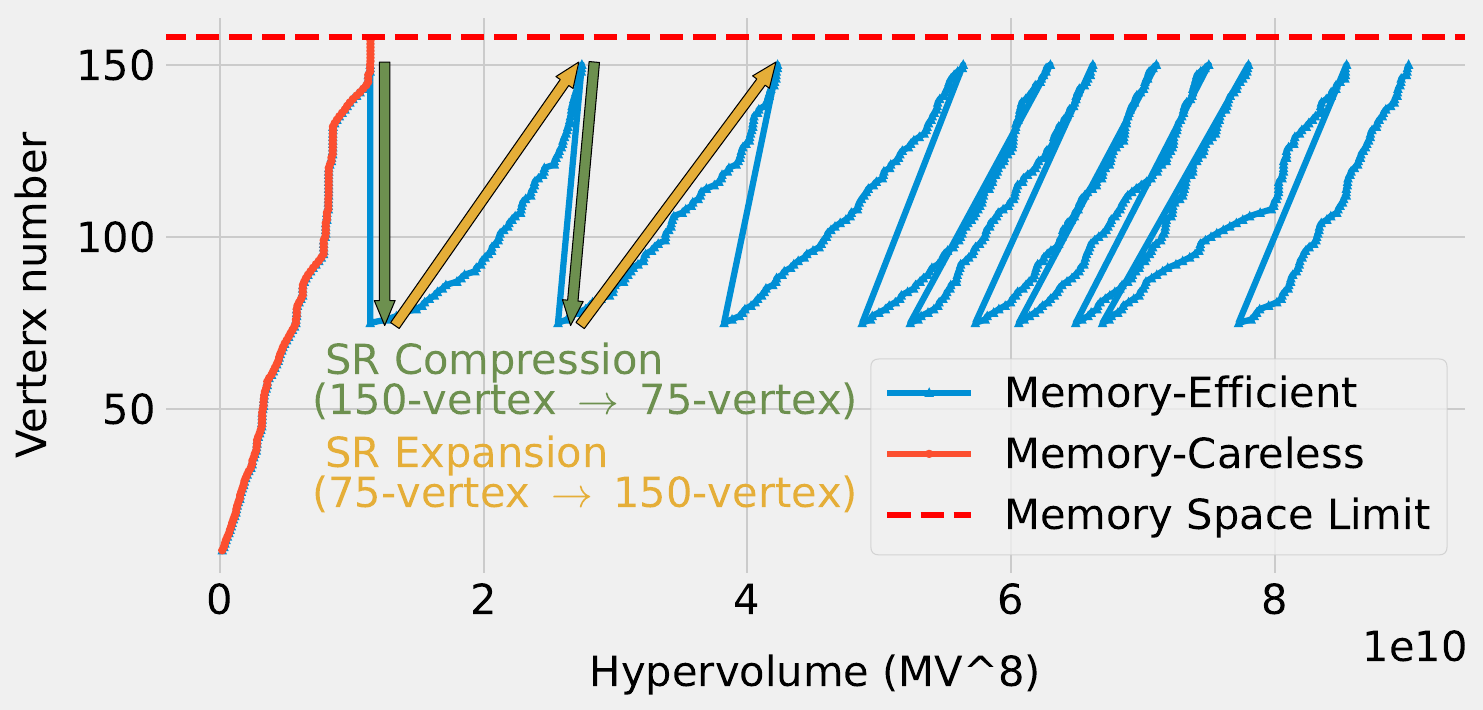}
        \caption{The hypervolume curve in 8D SR computation}
        \label{fig:vol_curve}
\end{figure}

\subsection{The applicability to grid-aware computation challenges}
\subsubsection{Optimal energy storage sizing}
\label{subsubsec:sizing}
To verify that the compressed SR compression may boost the solving process of grid-aware computation challenge. The general problem setting has been given in Section\,\ref{subsubsec:gridaware}. Focusing on the 9-bus network, we generate 720 testing scenarios by referring to \cite{gong2023Multi}. In each scenario, the power flow within the grid is infeasible, and we aim to minimize the cost of the energy storage equipment, which are additionally installed at relevant POCs to retrieve the grid feasibility. Through alternatively setting $\textbf{R}$ in (\ref{eq:region_op}) to the original 477-vertex or 468-vertex, the compressed 50-vertex, or the compressed 15-vertex SR, we compare the respective solving results to analyze the advantages of SR compression. The GUROBI 10.0 software serves as a benchmark solver to acquire the optimal solutions of (\ref{eq:region_op}) in all scenarios\,\cite{gurobi}.

By averaging the results of all testing scenarios, the corresponding computational performances under various SRs are listed at the left portion of Table\,\ref{tab:result}.
After replacing the original SR by the compressed 50-vertex one, considering the ignorable average objective value loss, this compressed SR helps to save over 64.4\% solving time in this grid-aware computational challenge. Regarding the further compressed 15-vertex one, the relative objective loss rises to 1.36\%, which is still acceptable for the grid operator in practice. Reversely, with this accuracy sacrifice, the respective solving time has been reduced by 87.1\%. Even with the SR computation time included, the summative computation time for all scenarios can be reduced by at least 52.9\%.

\begin{table*}[t]
    \centering
    \caption{Solution results in SR-related optimization models with various SRs}
    \begin{tabular}{ c | c c | c c c | c c c}
        \toprule
        \multirow{2}{*}{POCs} & \multirow{2}{*}{SR type} &\multirow{2}{*}{$T_c$\,(s)} & \multicolumn{3}{c|}{ Storage sizing (720 scenarios)} & \multicolumn{3}{c}{Capacity allocation (1000 scenarios)}\\
         & & & $O$\,(MW) & $T_a$\,(ms) & $T_s$\,(s) & $O$\,(MW) & $T_a$\,(ms) & $T_s$\,(s)\\
        \midrule
        \multirow{3}{*}{2,4,7}   & 477-vertex & - & 2.122 & 92.98 & 66.95 & 24.554 & 117.48 & 117.48\\
         & 50-vertex & 22.78 & 2.123 & 33.10 & 46.61&  24.545 & 38.28 & 61.06\\
                & 15-vertex & 22.86 & 2.151 & 12.04 & 31.53 & 24.500 & 12.05 & 34.91\\

        \midrule
        \multirow{3}{*}{3,5,7}   & 468-vertex & - & 5.592 & 115.72 & 83.32 & 14.010 & 168.83 & 168.83\\
         & 50-vertex & 26.05 & 5.592 & 32.74 & 49.62&  14.010 & 42.72 & 68.77\\
                & 15-vertex & 26.64 & 5.570 & 11.30 & 34.78 & 14.010 & 12.68 & 39.32\\
         \bottomrule
    \end{tabular}
    \begin{tablenotes}
      \item (1) $T_c$ represents the time required for SR compression. $O$ denotes the objective value. $T_a$ indicates the average solution time for solving the optimization model in each scenario. $T_s$ measures the total time spent on all scenarios, including both the compression time and the solution time.
\end{tablenotes}
    \label{tab:result}
\end{table*}

To visualize the objective value divergence of all 720 scenarios in the instance of POC\,2,4,7, particularly between the original SR and the compressed 15-vertex SRs, we draw their objective value distributions in Fig.~\ref{fig:error_compare}. Respective kernel distribution estimation plots are also provided, whose overlapping relationship further claims the high similarity of the optimization results\,\cite{hardle2004nonparametric}. Through calculating the absolute values of those deviations, the deviation distribution is also depicted in Fig.~\ref{fig:error_compare}, where most deviations are observed to be located in the range of 0-100\,kW and relatively low. Based on the above observations, in this store sizing problem, with the acceptable optimality loss, the effectiveness of SR compression on solution acceleration has been confirmed.
\begin{figure}
    \centering
    \includegraphics[width=\linewidth]{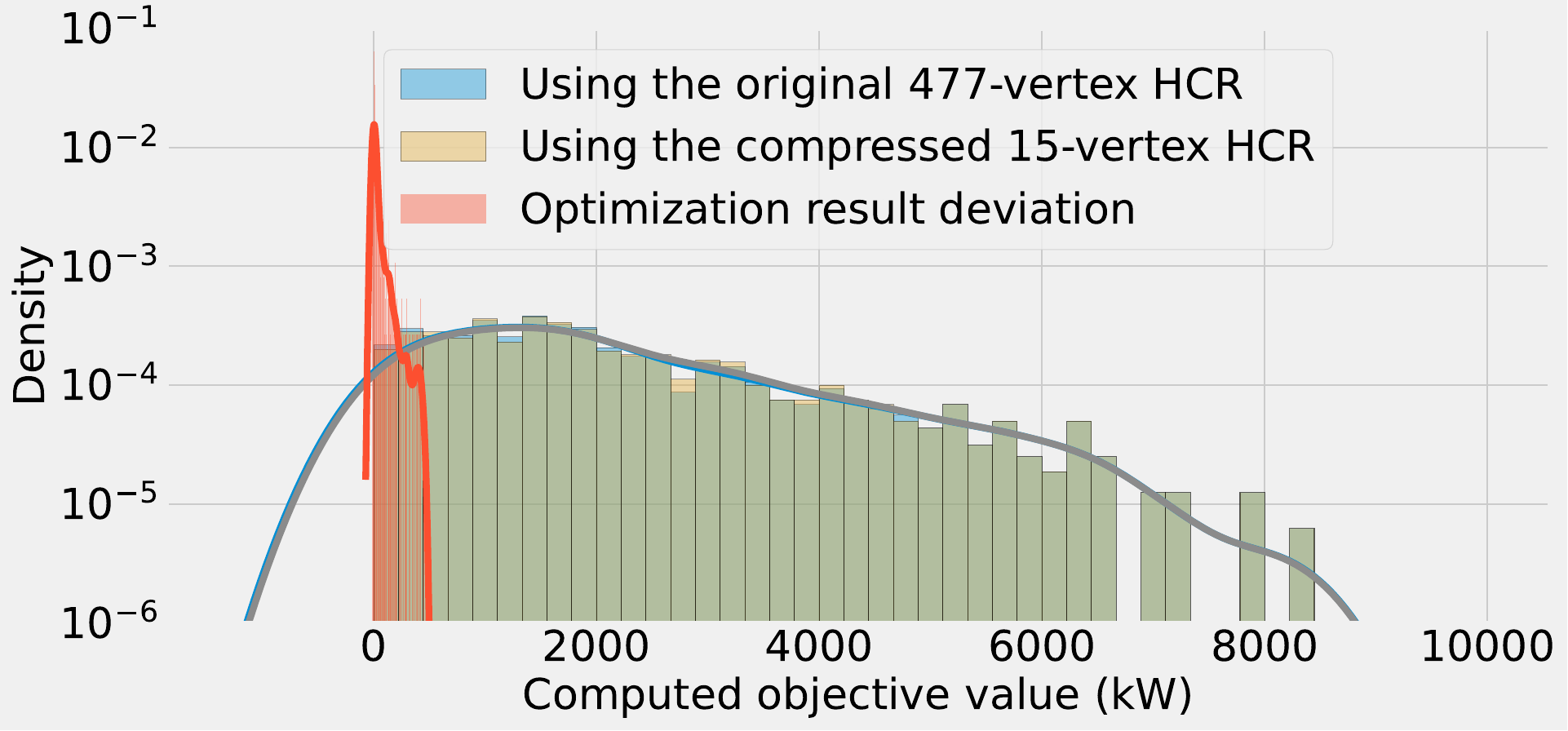}
    \caption{Result distributions in optimal energy storage sizing}
    \label{fig:error_compare}
\end{figure}

\subsubsection{Optimal capacity allocation}
Inheriting the testing environment in the last part, we aim to maximize the total capacity of all involved POCs in each scenario. The loading ratio of the energy unit at each POC is dynamic, indicating different $\alpha_j$ values in each scenario. This paper utilizes realistic measurement data of loading ratios over three EV chargers in the Netherlands, which are provided by ElaadNL\,\cite{ElaadData}. The plots of 5-minute-resolution charging profiles can be found at Chapter\,4.2 in \cite{gong2025thesis}. Based on this dataset, there are 1000 scenarios being sampled for integration as different $\alpha_i$ values. $\Hat{\alpha}_i$ is set to be 80\%.

Sharing the same computation environment with that in Section\,\ref{subsubsec:sizing}, we still compare the results under original and compressed SRs. By averaging the results of all 1000 testing scenarios, the corresponding computation performances are listed by the right portion of Table\,\ref{tab:result}. After replacing the original SR by the compressed 50-vertex one, considering the ignorable average objective value loss, this compressed SR helps to save over 67.4\% solving time in this grid-aware computational challenge. Regarding the further compressed 15-vertex one, the relative objective loss rises to 0.22\%, which is still acceptable for the grid operator in practice. Reversely, with this accuracy sacrifice, the respective solving time has been reduced by 89.7\%. By including the SR computation time, the summative computation time for all scenarios can be reduced by at least 70.2\%.

Furthermore, alike Fig.\,\ref{fig:error_compare}, Fig.\,\ref{fig:error_compare2} is provided to visualize the objective value divergence when employing different SRs in the instance of POC\,2,4,7. Along with the high similarity between the objective value distribution under various SRs, the close-to-zero deviation distribution has confirmed the limited  influence of optimization results after the SR compression. To this end,  in the capacity allocation problem, the benefits of SR compression on computational task acceleration with limited optimality loss have also been confirmed.

\begin{figure}
    \centering
    \includegraphics[width=\linewidth]{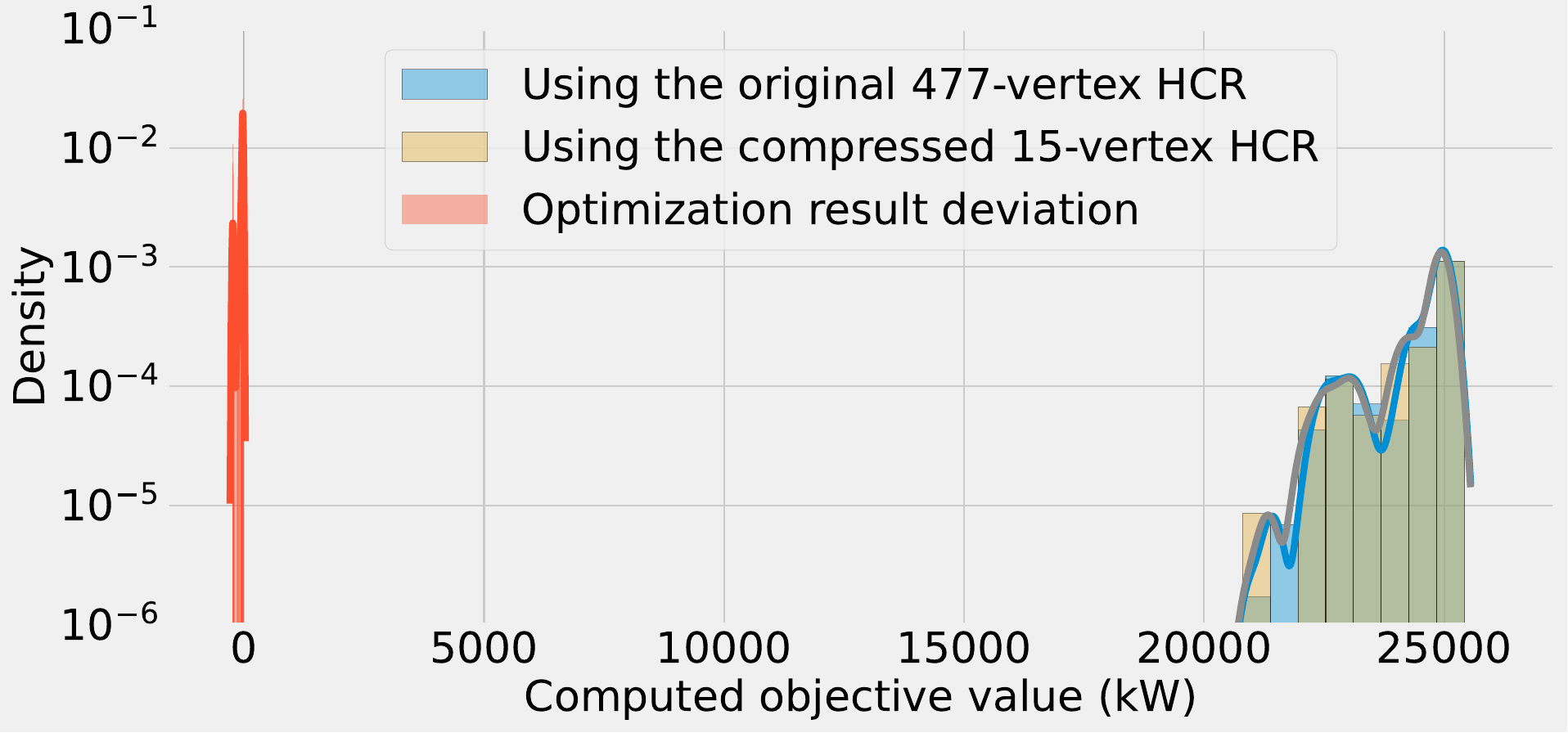}
    \caption{Result distributions in optimal capacity allocation}
    \label{fig:error_compare2}
\end{figure}

\subsection{Summary}
Using real-world grid cases in various scales, the efficacy of the proposed SR compression approach has been experimentally validated. Furthermore, both benefits of SR compression, including enhanced SR dimensionality and accelerated grid-aware optimization model solving, have also been confirmed to clarify the value of the proposed method in this paper.

\section{Conclusion}
\label{sec:conclusion}
The SR may bring significant value to contemporary grid operation practices, including the exploitation of grid power delivery potential and the acceleration of grid-aware computation tasks. Concerning the first part, its applicability is influenced by ``curse of dimensionality". Regarding the second part, the value amount is affected by the SR data size. Both motivate the implementation of SR compression. After clarifying the mathematical equivalence between the optimal SR compression and vertex pruning, this paper proposes a hybrid vertex pruning approach that mixes the local and backward search policies, and integrates the VNS method to achieve a balance between computational complexity and optimality. The benefit of this method in enhancing computational efficiency is evidenced by its role in enabling memory-efficient high-dimensional SR computations and in providing time-efficient solutions to SR-related optimization problems.

Relevant numerical studies have verified the effectiveness of this method, with which we have achieved over 96.0\% data size reduction with at most 1.65\% hypervolume loss. As a byproduct, the benefits of SR compression in improving the SR dimensionality and accelerating the grid-aware computation process have also been validated. Looking ahead, more interdisciplinary methods, such as reinforcement learning or random geometry analysis, will be investigated for pruning policy integration in future works. We aim to uncover their potential to support more high-efficiency compression strategies, particularly in high-dimensional SR scenarios.

\appendices

\section{An counterexample of long-term
optimality with the local search policy}
\label{app:counter}
To prove this long-term optimality loss of the local search policy, a 2D counterexample is given, where the 5-vertex polygon is anticipated to be compressed into a 3-vertex one. As illustrated in Fig.\,\ref{fig:counter}, By following the local research policy, the hypervolume loss will be visualized as Regions\,1 and 2 in two steps, leaving Region\,3 as the compressed one. Meanwhile, by solving the optimization problem (\ref{eq:prime}), the orange triangle in Fig.\,\ref{fig:good} is derived, which has a larger space than that of Region\,3. This result confirms the loss of long-term optimality attributed to the local search policy.

\begin{figure}[]
     \centering
      \begin{subfigure}[b]{0.24\textwidth}
         \centering
         \includegraphics[width=\textwidth]{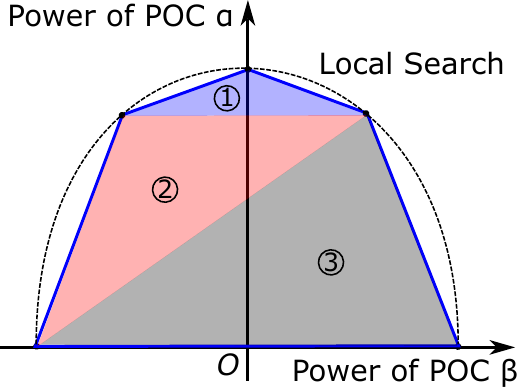}
         \caption{Local search result}
         \label{fig:counter}
     \end{subfigure}
     \begin{subfigure}[b]{0.24\textwidth}
         \centering
         \includegraphics[width=\textwidth]{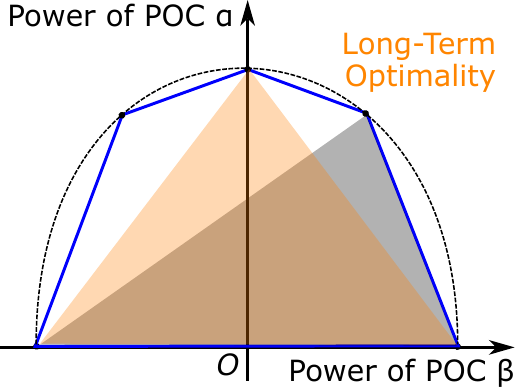}
         \caption{Long-term optimal result}
         \label{fig:good}
     \end{subfigure}
        \caption{The long-term optimality loss of local search policy}
        \label{fig:counterexample}
\end{figure}

\section*{Acknowledgment}
The authors would like to acknowledge the grid data reference from Alliander N.V. and Enexis Groep, which serve as the Dutch DSOs and the coordinators for the NEON project. ElaadNL is also appreciated for providing charging data within the NEON framework.
\ifCLASSOPTIONcaptionsoff
  \newpage
\fi

\nocite{*}
\bibliographystyle{IEEEtran}
\bibliography{reference}

\end{document}